\documentclass[12pt,draftclsnofoot,letterpaper,onecolumn]{IEEEtran}
\usepackage[compress]{cite}
\usepackage[utf8]{inputenc}
\usepackage{amsmath}
\usepackage{amssymb}
\usepackage{mathrsfs}
\usepackage{enumerate}
\usepackage{adjustbox}
\usepackage{xcolor}
\usepackage{graphicx}
\usepackage{algorithm}
\usepackage{algpseudocode}
\usepackage{xpatch}
\usepackage{mathtools}
\usepackage{nicefrac}
\usepackage{float}
\floatstyle{plaintop}
\restylefloat{table}
\usepackage{cleveref}

\algdef{SE}[DOWHILE]{Do}{doWhile}{\algorithmicdo}[1]{\algorithmicwhile\ #1}%
\algrenewcommand\alglinenumber[1]{{\sffamily\footnotesize#1}}
\makeatletter
\xpatchcmd{\algorithmic}{\itemsep\z@}{\itemsep=.25ex plus2pt}{}{}
\makeatother

\newtheorem{Lemma}{Lemma}
\newtheorem{Remark}{Remark}

\DeclareMathOperator*{\argmax}{arg\,max}

\usepackage{color}
\usepackage{multirow} 
\usepackage{soul}
\usepackage{environ}         % provides \BODY
\usepackage{etoolbox}        % provides \ifdimcomp
\usepackage[acronym]{glossaries}
\newacronym{1g}{1G}{first-generation}
\newacronym{4g}{4G}{fourth-generation}
\newacronym{5g}{5G}{fifth-generation}
\newacronym{pca}{PCA}{principal component analysis}
\newacronym{iot}{IoT}{Internet of Things}
\newacronym{mimo}{MIMO}{multiple-input multiple-output}
\newacronym{ris}{RIS}{reconfigurable intelligent surface}
\newacronym{siso}{SISO}{single-input-single-output}
\newacronym{dnn}{DNN}{deep neural network} 
\newacronym{lstm}{LSTM}{long short-term memory}
\newacronym{lmmse}{LMMSE}{linear minimum mean-squared error}
\newacronym{gpu}{GPU}{graphic processing unit}
\newacronym{cb}{CB}{conjugate beamforming}
\newacronym{mamimo}{mMIMO}{massive multiple-input multiple-output}
\newacronym{cf-mmimo}{CF-mMIMO}{cell-free mMIMO}
\newacronym{sumimo}{SU-MIMO}{single user MIMO}
\newacronym{mumimo}{MU-MIMO}{multi user MIMO}
\newacronym{embms}{eMBMS}{evolved Multimedia Broadcast and Multicast Service}
\newacronym{sca}{SCA}{successive convex approximation}
\newacronym{sinr}{SINR}{signal-to-interference-plus-noise ratio}
\newacronym{ula}{ULA}{uniform linear array}
\newacronym{mcs}{MCS}{modulation and coding scheme}
\newacronym{mrt}{MRT}{maximum ratio transmission}
\newacronym{zf}{ZF}{zero-forcing}
\newacronym{mr}{MR}{maximum ratio}
\newacronym{se}{SE}{spectral efficiency}
\newacronym{sse}{SumSE}{sum spectral efficiency}
\newacronym{mise}{MinSE}{minimum spectral efficiency}
\newacronym{asd}{ASD}{angular standard deviation}
\newacronym{adr}{ADR}{aggregated data rate}
\newacronym{embb}{eMBB}{enhanced mobile broadband}
\newacronym{mmtc}{mMTC}{massive machine type communications}
\newacronym{urllc}{URLLC}{ultra reliable low latency communications}
\newacronym{csi}{CSI}{channel state information}
\newacronym{pmi}{PMI}{precoding matrix indicator}
\newacronym{ri}{RI}{rank indicator}
\newacronym{csi-rs}{CSI-RS}{CSI-reference signal}
\newacronym{cri}{CRI}{CSI-RS resource indicator}
\newacronym{bs}{BS}{base station}
\newacronym{re}{RE}{resource element}
\newacronym{mmwave}{mmWave}{millimeter-wave}
\newacronym{umwave}{$\mu$mWaves}{micrometer waves}
\newacronym{rnn}{RNN}{recurrent neural network}
\newacronym{cnn}{CNN}{convolutional neural network}
\newacronym{ngmn}{NGMN}{next-generation mobile network}
\newacronym{lte}{LTE}{Long Term Evolution}
\newacronym{lte-a}{LTE-A}{Long Term Evolution Advanced}
\newacronym{5gnr}{5G NR}{5G New Radio}
\newacronym{mm}{MM}{mixed mode}
\newacronym{cdf}{CDF}{cumulative distribution function}
\newacronym{phy}{PHY}{physical}
\newacronym{mac}{MAC}{medium access control}
\newacronym{3gpp}{3GPP}{3rd Generation Partnership Project}
\newacronym{fdd}{FDD}{frequency division duplexing}
\newacronym{tdd}{TDD}{time division duplexing}
\newacronym{ofdm}{OFDM}{orthogonal frequency division multiplexing}
\newacronym{ss}{SS}{synchronization signal} 
\newacronym{pss}{PSS}{primary synchronization signal} 
\newacronym{sss}{SSS}{secondary synchronization signal} 
\newacronym{pbch}{PBCH}{physical broadcast channel} 
\newacronym{dmrs}{DMRS}{demodulation reference signal} 
\newacronym{gnb}{gNB}{next generation nodeB} 
\newacronym{rsrp}{RSRP}{reference signal received power} 
\newacronym{rrm}{RRM}{radio resource management} 
\newacronym{srs}{SRS}{sounding reference signal} 
\newacronym{ran}{RAN}{radio access network} 
\newacronym{nn}{NN}{neural network} 
\newacronym{ue}{UE}{user equipment} 
\newacronym{awgn}{AWGN}{additive white Gaussian noise} 
\newacronym{epa}{EPA}{Extended Pedestrian A model}
\newacronym{eva}{EVA}{Extended Vehicular A model}
\newacronym{etu}{ETU}{Extended Typical Urban model}
\newacronym{tdl}{TDL}{tapped delay line}
\newacronym{tl}{TL}{transfer learning}
\newacronym{cdl}{CDL}{clustered delay line}
\newacronym{uma}{UMa}{urban macro-cell}
\newacronym{isd}{ISD}{inter-site distance}
\newacronym{nlos}{NLOS}{non-line of sight}
\newacronym{los}{LOS}{line of sight}
\newacronym{o2o}{O2O}{outdoor-to-outdoor}
\newacronym{o2i}{O2I}{outdoor-to-indoor}
\newacronym{ul}{UL}{uplink}
\newacronym{dl}{DL}{downlink}
\newacronym{ls}{LS}{least squares}
\newacronym{mmse}{MMSE}{minimum mean square error}
\newacronym{snr}{SNR}{signal-to-noise ratio}
\newacronym{mse}{MSE}{mean square error}
\newacronym{nr}{NR}{New Radio}
\newacronym{prb}{PRB}{physical resource block}
\newacronym{scs}{SCS}{subcarrier spacing}
\newacronym{bler}{BLER}{block error rate}
\newacronym{smmmra}{SMMMRA}{subgroup multicast \gls{mamimo} resource allocation}
\newacronym{mmf}{MMF}{max-min fairness}
\newacronym{smmu}{SMMU}{subgroups of multicast \gls{mamimo} users}
\newacronym{gsmma}{GSMMA}{greedy subgroup multicast \gls{mamimo} algorithm}
\newacronym{sdr}{SDR}{semidefinite relaxation}
\newacronym{wmmse}{WMMSE}{weighted minimum mean square error}

\usepackage{flushend}
\newcommand{\herm}{^\mathsf{H}}
\newcommand{\trans}{^\mathsf{T}}
\usepackage{tikz}
\usepackage{changepage}
\usetikzlibrary{fadings}
\tikzfading[name=fade out, 
    inner color=transparent!0,
    outer color=transparent!100]
\usetikzlibrary{patterns}
\usetikzlibrary{shadows.blur}
\usetikzlibrary{shapes}
\usetikzlibrary{arrows}

\ifCLASSOPTIONcompsoc
    \usepackage[caption=false, font=normalsize, labelfont=sf, textfont=sf, subrefformat=parens, labelformat=parens]{subfig}
\else
\usepackage[caption=false, font=footnotesize, subrefformat=parens, labelformat=parens]{subfig}
\fi

\begin{document}
\title{Deep Learning-Assisted Multicast Subgrouping in Massive MIMO}

\author{Alejandro de la Fuente, Leopoldo Carro-Calvo, and Giovanni Interdonato~\IEEEmembership{Member,~IEEE}

\thanks{A. de la Fuente and L. Carro-Calvo are with the Department of Signal Theory and Communications; L. Carro-Calvo is also with Institute of Data, Complex Networks and Cybersecurity Sciences (DCNC Sciences), Universidad Rey Juan Carlos, Camino del Molino, 28943, Fuenlabrada (Madrid), Spain (email: alejandro.fuente, leopoldo.carro@urjc.es).

G. Interdonato is with Ericsson Research, Ericsson AB, 164 83, Stockholm, Sweden. G. Interdonato was with the Department of Electrical and Information Engineering (DIEI), University of Cassino and Southern Lazio, 03043 Cassino, Italy.}

\thanks{This work was supported in part by SOFIA-WIND funded by MICIU/AEI/10.13039/501100011033 and ERDF, EU under Grant PID2023-147305OB-C33; in part by brAIn5G funded by MICIU/AEI/10.13039/501100011033 and FEDER under Grant PID2024-161515OA-I00; in part by POLIGRAPH funded by MCIU/AEI/10.13039/501100011033 and by the ``European Union NextGenerationEU/PRTR'' under the Grant PID2022-136887NB-I00; in part by  AI-XCAST6G funded by the IMPULSO program for research at the Rey Juan Carlos University under Grant ref F1263.}}

%\markboth{IEEE copyright. This is an author-created postprint version. The final publication is available at https://ieeexplore.ieee.org/xxxx . A. de la Fuente, L. Carro-Calvo, and G. Interdonato “Deep Learning-Assisted Multicast Subgrouping in Massive MIMO”, IEEE Open Journal of the Communications Society 7, xxx-xxx}

\maketitle

\begingroup
\renewcommand\thefootnote{}\footnote{%
\textit{This is an author-created postprint version. 
The final publication is available at https://ieeexplore.ieee.org/xxxx. 
\copyright\ 2026 IEEE. Personal use is permitted. 
Citation: A. de la Fuente, L. Carro-Calvo, and G. Interdonato, 
“Deep Learning-Assisted Multicast Subgrouping in Massive MIMO,” 
IEEE Open Journal of the Communications Society, vol. 7, pp. xxx--xxx.}
}
\addtocounter{footnote}{-1}
\endgroup

\begin{abstract}
Efficient content delivery in massive multiple-input multiple-output (mMIMO) multicasting is fundamentally limited by pilot overhead and the need to serve heterogeneous users with a common transmission rate. Conventional approaches either suffer from pilot contamination or are constrained by the worst-user effect, motivating the need for adaptive subgrouping strategies.
In this paper, we propose a deep learning-assisted multicast subgrouping framework that infers the number of multicast subgroups directly from users’ spatial channel statistics. A snapshot-specific principal component analysis (PCA) is applied to user covariance matrices to obtain a compact representation, which is processed by a sequential long short-term memory (LSTM) encoder capable of handling variable-size user sets. The model predicts the number of subgroups and groups users based on their statistical similarity.
To further improve system performance, we introduce a transfer learning (TL) extension where a pre-trained LSTM encoder is reused and a lightweight dense head is fine-tuned to estimate the sum spectral efficiency (SE) as a function of the subgroup configuration. This enables selecting near-optimal subgrouping solutions without exhaustive search.
Simulation results demonstrate that the proposed approach consistently outperforms benchmark methods, including unicast transmission, conventional multicast, random subgrouping, and density-based clustering. The TL-enhanced model achieves up to $85\%$ of the maximum achievable spectral efficiency while maintaining robust performance across diverse spatial user distributions and under imperfect covariance information.
\end{abstract}

\begin{IEEEkeywords}
massive MIMO, multicasting, subgrouping, deep learning, long short-term memory.
\end{IEEEkeywords}

\maketitle

\glsresetall

\section{INTRODUCTION}

{\color{black}\IEEEPARstart{T}{he} mobile network data traffic has increased by more than an order of magnitude over the past decade and continues to grow steadily, reaching on the order of 150--200~EB per month by 2024--2025~\cite{2025Ericsson}. This growth is expected to persist over the remainder of the decade, driven by the increasing number of smartphone subscriptions and the rise of bandwidth-intensive services such as mobile video, which already accounts for more than 70\% of total traffic. Commercial \gls{5g} deployments are accelerating this trend, and by the end of 2029, \Gls{5g} is projected to surpass 5.6~billion subscriptions worldwide, representing more than 60\% of total mobile lines~\cite{2025Ericsson}.}

Within this rapidly expanding traffic volume, a substantial share corresponds to content that can be consumed by multiple users concurrently; hence, broadcast and multicast mechanisms are natural enablers to exploit such group demand \cite{2016delaFuente,2017Araniti}. In particular, multicasting is pivotal for group-oriented services such as live video streaming, virtual/augmented reality, software updates, and \gls{iot} use cases.

\Gls{mamimo} is a key technology for beyond-\gls{5g} systems, with emerging paradigms such as cell-free architectures \cite{2024Ngo,2026Buzzi}, holographic MIMO \cite{2020Huang}, and reconfigurable intelligent surfaces \cite{2018Hu,2020DiRenzo}, offering substantial gains in spectral and energy efficiency due to favorable propagation and channel hardening \cite{2016Bjornson,2019Bjornson,2020Zhang}. As a result, combining multicasting with \gls{mamimo} enables efficient large-scale content delivery \cite{2019Dong}. However, efficiently serving multicast users remains challenging, particularly regarding how to group users according to their channel characteristics.

Multicast subgrouping addresses this challenge by partitioning users requesting the same content into subsets with similar propagation properties \cite{2022delaFuente,2018Riera,2024delaFuente}. This enables the design of multiple precoders for the same multicast stream, improving spectral efficiency while preserving multicast semantics.

{\color{black}In this work, multicast subgrouping is formulated as a two-step process: first, the number of multicast subgroups is inferred from spatial covariance features, and second, users are assigned to subgroups based on covariance similarity. In particular, estimating the number of multicast subgroups constitutes the central learning task, as it directly determines the structure of the subgrouping solution.}

{\color{black}Importantly, subgrouping does not alter the multicast service itself, as all users receive the same data stream. Instead, it enables the design of multiple precoders tailored to statistically homogeneous subsets of users, thereby mitigating the worst-user limitation inherent to conventional multicast transmission. This approach leverages the spatial correlation structure of massive MIMO channels to improve physical-layer efficiency without increasing the transmitted content~\cite{2022delaFuente,2024delaFuente}.}

{\color{black}However, determining the appropriate number of multicast subgroups in a scalable and adaptive manner remains a challenging and largely unresolved problem in scalable massive MIMO multicast systems. This problem becomes especially critical in large-scale multicast scenarios, where the number of users is high and spatial channel conditions are highly heterogeneous.}

{\color{black}This motivates the need to revisit multicast subgrouping under a data-driven perspective, where the subgroup structure is inferred rather than predetermined.}

\subsection{RELATED LITERATURE}

A broad body of work has investigated multicast transmission and precoding in \gls{mamimo} systems \cite{2013Yang,2017SadeghiTWC,2018SadeghiTWC1,2018SadeghiTWC2,2019Dong}. These works essentially adopt two main strategies: unicast transmission per user, which avoids worst-user effects but incurs high pilot overhead and pilot contamination, and single-group multicast, which reduces training overhead but limits performance due to the weakest user.

To overcome these limitations, multicast subgrouping strategies have been proposed \cite{2022delaFuente,2018Riera,2024delaFuente}. These approaches group users based on spatial covariance similarity, typically using clustering techniques such as K-means \cite{2018Riera}. This enables the design of subgroup-specific precoders while maintaining multicast semantics.

{\color{black}In \cite{2022delaFuente}, multicast subgrouping is performed by clustering users according to the similarity of their spatial covariance matrices, using a K-means algorithm. While this approach effectively captures spatial correlation among users, the number of multicast subgroups is selected heuristically and does not adapt to the underlying user distribution. Consequently, determining the appropriate number of multicast subgroups remains a challenging and largely unresolved problem in scalable massive MIMO multicast systems. This limitation motivates the present work, which aims to infer the number of multicast subgroups in a data-driven and scalable manner, thereby enabling adaptive subgroup configurations without relying on heuristic or fixed designs. It is important to distinguish between the number of spatial clusters used in the data generation process and the number of multicast subgroups that maximizes the sum spectral efficiency. These two quantities represent fundamentally different notions, since the former reflects the underlying spatial distribution of users, whereas the latter is governed by system-level trade-offs such as pilot overhead, inter-group interference, and precoding design.
}

{\color{black}The metric $f_{\mathrm{d}}(\boldsymbol{R}_k, \boldsymbol{R}_{k'})$ takes values in $[0,1]$, where $0$ reflects orthogonality and $1$ corresponds to fully aligned covariance structures, with intermediate values capturing gradual levels of similarity. In \cite{2022delaFuente}, this metric is used with K-means clustering to obtain efficient subgroup partitioning, although without an optimal mechanism to determine the subgroup count.}

Optimization-based multicast beamforming methods, such as \gls{sdr}- and \gls{sca}-based approaches \cite{2023Zhang,2026Zaher}, achieve high performance under full instantaneous \gls{csi} assumptions. However, they require one pilot per user and entail high computational complexity, making them unsuitable for scalable \gls{mamimo} scenarios under limited coherence intervals.

Deep learning techniques have been widely explored in wireless communications, supported by advances in parallel computing and hardware accelerators \cite{2017Chen,2020Chen}, including applications in signal detection, decoding, and resource allocation \cite{2017Samuel,2016Nachmani,2017Challita,2022Melgar}. \Glspl{cnn} and \gls{lstm} networks are particularly effective at capturing structured patterns in wireless channels \cite{9276722,9706946,10073857,2020Marinberg,2021Jiang,2024Carro,2020Lianjun,2022Jin,2020Wang,2021Xiao,2024DiGennaro,2026DiGennaro}.

{\color{black}Deep learning is particularly suitable for problems involving variable-size user sets and complex spatial structures. In this context, recurrent architectures such as \gls{lstm} can capture latent dependencies among users, making them a natural candidate for learning subgroup structures from spatial covariance information \cite{Chukhno2021,2024DiGennaro}. In the proposed framework, the LSTM architecture is specifically used to infer the number of multicast subgroups, leveraging its ability to process variable-length user representations and extract latent spatial patterns.}

{\color{black}Recent studies have also investigated learning-based resource allocation and access mechanisms in large-scale and \gls{cf-mmimo} systems \cite{2025Xu,2026Ying}. While these works address complementary problems, they do not consider multicast subgrouping or the estimation of the number of multicast subgroups.}

{\color{black}While the present work focuses on a single-cell massive MIMO setting, the subgrouping problem is primarily driven by the spatial distribution of users. Therefore, the proposed learning-based framework is expected to generalize at the level of subgroup-structure inference across different network architectures, although performance evaluation remains system-dependent.}

{\color{black}From a system-design perspective, assigning orthogonal uplink pilots to all multicast users is not scalable, since the overhead grows linearly with the number of users and exceeds practical coherence intervals. For this reason, this work focuses on a one-pilot-per-subgroup structure, which enables efficient pilot reuse while maintaining manageable training overhead. This makes scalable subgrouping strategies essential for practical massive MIMO multicast deployments.}

{\color{black}Although all users request the same multicast payload, prior work has shown that forming subgroups based on spatial similarity can significantly improve performance by mitigating worst-user limitations~\cite{2016delaFuente}. This motivates the need for scalable and adaptive subgroup selection mechanisms.}

{\color{black}Unlike prior learning-based approaches that directly map scenario-dependent features to a fixed subgroup count, the proposed framework leverages second-order spatial channel statistics and variable-length user representations to address this challenge, allowing the number of multicast subgroups to be inferred in a physically meaningful and scalable manner, while explicitly capturing the underlying spatial structure of the user distribution.}

{\color{black}From a learning perspective, the multicast subgrouping problem can be interpreted as an inference task over structured spatial channel data, where the goal is to extract the latent grouping structure that determines the optimal number of multicast subgroups.}

\subsection{MAIN CONTRIBUTIONS}

We emphasize that the proposed framework performs \emph{intra-service subgrouping} for a single multicast message in \gls{mamimo} systems.

The main contributions are:

\begin{itemize}
    \item {\color{black}An \gls{lstm}-assisted subgrouping mechanism that infers the number of multicast subgroups from spatial covariance structures.}
    \item {\color{black}A preprocessing pipeline based on per-snapshot \gls{pca} to transform variable-size covariance matrices into fixed-length feature representations.}
    \item {\color{black}A customized LSTM architecture that learns latent spatial correlations and maps them to subgroup configurations.}
    \item {\color{black}A transfer-learning extension that directly predicts normalized sum \gls{se}, avoiding exhaustive subgroup search.}
    \item {\color{black}An intra-service subgrouping strategy that preserves multicast semantics while improving physical-layer efficiency.}
    \item {\color{black}An extensive simulation study validating the proposed approach across diverse scenarios.}
\end{itemize}

\subsection{PAPER OUTLINE AND NOTATIONS}
The structure of this paper is as follows. Section~II introduces the system model and outlines the proposed \gls{mamimo} multicast subgrouping framework operating over spatially correlated Rayleigh fading channels. Section~III describes in detail the DL-assisted strategy used to infer the multicast subgrouping. Section~IV presents numerical evaluations that demonstrate the effectiveness of the proposed approach under various scenarios. Finally, Section~V summarizes the main findings and discusses the relevance of forming multicast subgroups based on large-scale channel characteristics in \gls{mamimo} systems, while also highlighting promising directions for future work.

\textit{Notational remark:} Lowercase and uppercase bold symbols denote vectors and matrices, respectively. Calligraphic uppercase letters represent sets; for a set $\mathcal{A}$, its cardinality is written as $|\mathcal{A}|$. The operators $(\cdot)^{\mathsf{T}}$, $(\cdot)^{\ast}$, and $(\cdot)^{\mathsf{H}}$ stand for the transpose, conjugate, and Hermitian transpose. The field of complex numbers is denoted by $\mathbb{C}$. The expectation operator is written as $\mathbb{E}\{\cdot\}$. For a matrix $\boldsymbol{A}$, $\text{tr}(\boldsymbol{A})$ denotes its trace. A circularly symmetric complex Gaussian distribution with mean vector $\boldsymbol{\mu}$ and covariance matrix $\boldsymbol{\Sigma}$ is written as $\mathcal{CN}(\boldsymbol{\mu},\boldsymbol{\Sigma})$. The identity matrix of size $N$ is denoted by $\mathbf{I}_N$. The Euclidean norm of a vector $\boldsymbol{a}$ is expressed as $\|\boldsymbol{a}\|_2$, and $\langle \boldsymbol{A}, \boldsymbol{B} \rangle$ denotes the inner product between matrices $\boldsymbol{A}$ and $\boldsymbol{B}$.

%%%%%%%%%%%%%%%%%%%%%%%%%%%%%%%%%%%%%%%%%%%%%%%%%%%%%%%%%%%%%%%%%%%%%%%%%%%%%%%%%%%%%%%

\section{SYSTEM MODEL}
\label{sec:system_model}
We consider a single-cell \gls{mamimo} system operating in \gls{tdd}, where a base station equipped with $M$ antennas serves $K$ single-antenna users over shared time-frequency resources.

\textcolor{black}{Since our focus is on multicast transmission, which inherently takes place in the downlink, the uplink data phase is not explicitly modeled. Each \gls{tdd} frame is divided into an uplink training interval and a downlink payload interval of lengths $\tau_{\mathrm{p}}$ and $\tau_{\mathrm{d}}$, respectively, with total duration $\tau_{\mathrm{c}}=\tau_{\mathrm{p}}+\tau_{\mathrm{d}}$ matching the coherence block. The channel is assumed to remain frequency-flat and time-invariant over each coherence block~\cite{2017Bjornsonbook}. This standard modeling assumption does not affect the conclusions of the multicast subgrouping analysis and is adopted throughout the paper.}

\subsection{CHANNEL MODEL}
We adopt a standard block-fading channel model in which, within each
time--frequency coherence block, the channel is frequency-flat and
remains constant, while realizations across different blocks are
statistically independent. The channel vector between the \gls{mamimo}
\gls{bs} and user $k$ in an arbitrary coherence block%
\footnote{For clarity, the coherence-block index is omitted.}%
is denoted by $\boldsymbol{h}_{k}\in\mathbb{C}^{M}$ and is modeled as
\[
\boldsymbol{h}_{k}\sim \mathcal{CN}\!\big(\boldsymbol{0}_{M},\,\boldsymbol{R}_{k}\big),
\]
where $\boldsymbol{R}_{k}\in\mathbb{C}^{M\times M}$ is a positive
semi-definite spatial covariance matrix. The corresponding average
channel gain is
\[
\beta_{k}=\frac{1}{M}\,\text{tr}\!\left(\boldsymbol{R}_{k}\right).
\]

\textcolor{black}{Since the second-order channel statistics evolve over large time scales, the set of covariance matrices $\{\boldsymbol{R}_{k}\}_{k\in\mathcal{K}}$ can be estimated from long-term observations across multiple coherence blocks and is treated as known in this work~\cite{Bjornson2016a}. Practical approaches for estimating spatial correlation matrices in massive MIMO systems can be found in~\cite{Bjornson2016a,CaireC17a,Neumann2018,Sanguinetti2019a,Upadhya2018}.}

\subsection{\textcolor{black}{MULTICAST USER SUBGROUPING (SYSTEM ASSUMPTIONS)}}
\label{sec:subgrouping}

{\color{black}For clarity, we summarize the main assumptions of the multicast subgrouping model in the following.

In contrast to conventional multicast transmission, where all users are served by a single beam, the proposed framework enables multiple parallel multicast transmissions tailored to user subsets.}

We consider a multicast subgrouping framework where a common service is delivered to $G \le K$ disjoint subgroups, all receiving the same data stream. Each subgroup is assigned a dedicated uplink pilot and downlink precoder, enabling efficient exploitation of channel-statistics heterogeneity across users.

{\color{black}Under the considered system model, users within the same subgroup reuse a common uplink pilot sequence, allowing the base station to estimate a composite subgroup channel, which serves as a representative channel for precoder design \cite{2022delaFuente}.}

\textcolor{black}{In the considered \gls{mamimo} multicast framework, users assigned to the same multicast subgroup reuse a common uplink pilot sequence and are served through a shared downlink precoding filter. This one-pilot-per-subgroup structure defines the CSI acquisition model assumed throughout the paper and enables the base station to estimate a composite channel representation for each multicast subgroup. This configuration is adopted as a system-level operating assumption.}

\textcolor{black}{Fig.~\ref{fig:system_model} illustrates a representative example of a multicast service operating in a single-cell massive \gls{mimo} deployment, where the same multicast content is delivered through multiple precoders corresponding to different user subgroups.}

\begin{figure}[t]
\centering
\begin{tikzpicture}[
    user/.style={circle,draw,thick,minimum size=6mm},
    antenna/.style={thick},
    beam/.style={fill opacity=0.25,draw opacity=0},
    >=latex
]

\filldraw[draw=black, thick, fill=gray!15]
  (-0.3,-1.7) rectangle (0.3,1.7);

\foreach \y in {-1.5,-1.25,-1.0,-0.75,-0.5,-0.25,0,
                0.25,0.5,0.75,1.0,1.25,1.5} {
  \draw[thick] (0.3,\y) -- (1.0,\y);
}

\node[rotate=90] at (-0.8,0) {BS};

\fill[black!50,beam]
  (1.0,0.9) -- (6.2,1.7) -- (6.2,0.9) -- cycle;
\draw[->,very thick,black!70!black] (1.0,0.9) -- (6.0,1.3);
\node[black!70!black] at (3.8,1.9) {$\mathbf{w}_1 x_1$};

\node[user,fill=black!25] at (6.0,1.3) {};
\node[user,fill=black!25] at (6.6,1.0) {};
\node[user,fill=black!25] at (5.9,0.9) {};
\node[black!70!black] at (6.4,2) {$g=1$};

\fill[green!50,beam]
  (1,0.0) -- (4.3,0.5) -- (4.3,-0.5) -- cycle;

\draw[->,very thick,green!70!black]
  (1,0.0) -- (3.8,0.0);

\node[green!70!black] at (2.4,0.6) {$\mathbf{w}_2 x_2$};

\node[user,fill=green!25] at (3.8,0.0) {};
\node[user,fill=green!25] at (4.3,-0.3) {};
\node[user,fill=green!25] at (3.7,-0.4) {};

\node[green!70!black] at (4,0.6) {$g=2$};

\fill[red!50,beam]
  (1.0,-0.9) -- (6.4,-0.9) -- (6.4,-1.7) -- cycle;
\draw[->,very thick,red!70!black] (1.0,-0.9) -- (6.1,-1.3);
\node[red!70!black] at (3.8,-1.8) {$\mathbf{w}_3 x_3$};

\node[user,fill=red!25] at (6.1,-1.3) {};
\node[user,fill=red!25] at (6.7,-1.6) {};
\node[user,fill=red!25] at (5.6,-1.6) {};
\node[red!70!black] at (6.3,-0.7) {$g=3$};

\end{tikzpicture}
\caption{Multicast subgrouping in a massive MIMO system. Spatially clustered multicast users are served by a base station equipped with a large antenna array. The same multicast content is transmitted through multiple precoded beams $\mathbf{w}_g x_g$, each tailored to a
user subgroup indexed by $g$ and characterized by similar large-scale channel statistics.}
\label{fig:system_model}
\end{figure}

\subsection{UPLINK SUBGROUP CHANNEL ESTIMATION}

Let $\boldsymbol{\psi}_g \in \mathbb{C}^{\tau_{\mathrm{p}}}$ denote the
pilot assigned to subgroup $g$, normalized such that
$\|\boldsymbol{\psi}_g\|^2=1$. Users belonging to the same multicast
subgroup reuse this pilot sequence, while pilots assigned to different
subgroups are mutually orthogonal. Since users sharing the same pilot
(co-pilot users) yield linearly dependent channel estimates, the
\gls{bs} cannot spatially resolve individual users within the same
subgroup~\cite{2016Marzetta,2017Bjornsonbook}.

\textcolor{black}{Without loss of generality, we set $\tau_{\mathrm{p}} = G$, such that $G$ orthogonal uplink pilot sequences are available, thereby avoiding pilot contamination across different
subgroups.}

{\color{black}This structure defines the CSI acquisition model and directly constrains the achievable spatial separability among users, thereby shaping the performance limits of multicast subgrouping.}

Under this configuration, the uplink pilot signal collected at the \gls{bs} over $\tau_{\mathrm{p}}$ channel uses is
\begin{equation}    
    \boldsymbol{Y} 
    = \sqrt{\tau_{\mathrm{p}}}\sum_{g=1}^{G}\sum_{k \in \mathcal{K}_g}
    \sqrt{q_{gk}}\,\boldsymbol{h}_{k}\,\boldsymbol{\psi}\trans_g 
    + \boldsymbol{N},
\end{equation}
where $q_{gk}$ denotes the per-symbol pilot transmit power of user
$k\in\mathcal{K}_g$, and $\boldsymbol{N}\in\mathbb{C}^{M\times
\tau_{\mathrm{p}}}$ is an \gls{awgn} matrix with i.i.d. entries
distributed as $\mathcal{CN}(0,\sigma_{\mathrm{u}}^2)$.

Projecting the received pilot matrix onto $\boldsymbol{\psi}_g^{\ast}$
yields the sufficient statistic for subgroup $g$:
\begin{equation}
\boldsymbol{y}^{g} 
= \sqrt{\tau_{\mathrm{p}}}\sum_{k \in \mathcal{K}_g }
\sqrt{q_{gk}}\,\boldsymbol{h}_{k} 
+ \boldsymbol{n}_g,
\label{eq:y_lg}
\end{equation}
with $\boldsymbol{n}_{g} =
\boldsymbol{N}\boldsymbol{\psi}^{\ast}_g \sim
\mathcal{CN}(0,\sigma_{\mathrm{u}}^2\mathbf{I}_M)$.

We define the subgroup-representative (average) channel as
\begin{equation}
\boldsymbol{h}^g 
= \frac{\sqrt{\tau_{\mathrm{p}}}}{K_g}
\sum_{k \in \mathcal{K}_g}\sqrt{q_{gk}}\,\boldsymbol{h}_{k},
\label{eq:h_gl}
\end{equation}
which is distributed as
$\boldsymbol{h}^g \sim \mathcal{CN}(\boldsymbol{0}_M,\boldsymbol{R}^g)$,
with covariance matrix
\begin{equation}
\boldsymbol{R}^g 
= \frac{\tau_{\mathrm{p}}}{K_g^2}
\sum_{k \in \mathcal{K}_g} q_{gk}\,\boldsymbol{R}_{k}.
\label{eq:R_lg}
\end{equation}

{\color{black}\begin{Remark} The factor $\sqrt{\tau_{\mathrm{p}}}$ in~\eqref{eq:y_lg}
results from projecting the received uplink signal onto the unit-energy
pilot sequence $\boldsymbol{\psi}_g$. Since $\|\boldsymbol{\psi}_g\|^2=1$
and the pilot spans $\tau_{\mathrm{p}}$ samples, this projection yields
an amplitude scaling of $\sqrt{\tau_{\mathrm{p}}}$, consistent with
standard MMSE channel estimation for pilot-based massive MIMO
systems~\cite{2017Bjornsonbook}.
\end{Remark}
}

\begin{Remark}
The ``strong correlation'' among users within the same multicast subgroup
refers to the \emph{similarity of their spatial covariance matrices}, not
to cross-correlation of instantaneous channel realizations. This modeling
assumption is standard in massive MIMO systems
\cite{2017Bjornsonbook,2016Marzetta,2018Chen}.
\end{Remark}

\begin{Lemma}
The \gls{mmse} estimate of the subgroup channel $\boldsymbol{h}^g$ at the
\gls{bs} can be written as~\cite[Sec.~3.2]{2017Bjornsonbook}
\begin{equation}
     \hat{\boldsymbol{h}}^g =
     K_g\, \boldsymbol{R}^g\, \boldsymbol{\Gamma}_{g}^{-1}\,
     \boldsymbol{y}^g ,
     \label{eq:h_gl_est}
\end{equation}
where
\begin{equation}
     \boldsymbol{\Gamma}_{g} =
     \tau_{\mathrm{p}} \sum_{k\in\mathcal{K}_g} q_{gk}\boldsymbol{R}_{k}
     + \sigma_{\mathrm{u}}^2\boldsymbol{I}_{M}.
     \label{eq:gamma_lg}
\end{equation}
The subgroup channel estimate follows a circularly symmetric complex
Gaussian distribution,
$\hat{\boldsymbol{h}}^g \sim
\mathcal{CN}(\boldsymbol{0}_M,
K_g^2 \boldsymbol{R}^g \boldsymbol{\Gamma}_{g}^{-1} \boldsymbol{R}^g)$,
and is uncorrelated with the subgroup channel estimation error
$\tilde{\boldsymbol{h}}^g \sim
\mathcal{CN}(\boldsymbol{0}_M,
\boldsymbol{R}^g -
K_g^2 \boldsymbol{R}^g \boldsymbol{\Gamma}_{g}^{-1} \boldsymbol{R}^g)$.
\end{Lemma}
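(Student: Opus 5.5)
The plan is to use the standard LMMSE/MMSE formula for jointly Gaussian vectors, $\hat{\boldsymbol{h}}^g=\mathbb{E}\{\boldsymbol{h}^g(\boldsymbol{y}^g)\herm\}\big(\mathbb{E}\{\boldsymbol{y}^g(\boldsymbol{y}^g)\herm\}\big)^{-1}\boldsymbol{y}^g$. It applies because $\boldsymbol{h}^g$ and $\boldsymbol{y}^g$ are zero-mean, jointly circularly symmetric Gaussian. They are both linear combinations of the independent Gaussian vectors $\{\boldsymbol{h}_k\}_{k\in\mathcal{K}_g}$ and $\boldsymbol{n}_g$, and the users' channels are taken mutually independent, as is standard. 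Under joint Gaussianity the conditional mean is linear in the observation, so the MMSE estimator coincides with the LMMSE one. It therefore suffices to compute two second-order moments.

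\textbf{Step 1 (observation covariance).} From \eqref{eq:y_lg}, using the independence of the $\boldsymbol{h}_k$ across users and from the noise, the cross terms vanish. This gives $\mathbb{E}\{\boldsymbol{y}^g(\boldsymbol{y}^g)\herm\}=\tau_{\mathrm{p}}\sum_{k\in\mathcal{K}_g}q_{gk}\boldsymbol{R}_k+\sigma_{\mathrm{u}}^2\boldsymbol{I}_M=\boldsymbol{\Gamma}_g$, which is \eqref{eq:gamma_lg}. It is invertible because $\sigma_{\mathrm{u}}^2>0$ and each $\boldsymbol{R}_k\succeq 0$.

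\textbf{Step 2 (cross-covariance).} The key observation is that, by \eqref{eq:h_gl}, $\boldsymbol{y}^g=K_g\boldsymbol{h}^g+\boldsymbol{n}_g$, since the sum in \eqref{eq:y_lg} equals $K_g$ times the representative channel. Hence $\mathbb{E}\{\boldsymbol{h}^g(\boldsymbol{y}^g)\herm\}=K_g\boldsymbol{R}^g$, with $\boldsymbol{R}^g$ as in \eqref{eq:R_lg}. This is a consistency check of the normalization, since $K_g^2\boldsymbol{R}^g=\tau_{\mathrm{p}}\sum_k q_{gk}\boldsymbol{R}_k$. Substituting the two moments yields \eqref{eq:h_gl_est}.

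\textbf{Step 3 (distributions).} The estimate $\hat{\boldsymbol{h}}^g$ is a linear map of the Gaussian vector $\boldsymbol{y}^g$, so it is circularly symmetric Gaussian with covariance $K_g\boldsymbol{R}^g\boldsymbol{\Gamma}_g^{-1}\boldsymbol{\Gamma}_g\boldsymbol{\Gamma}_g^{-1}K_g\boldsymbol{R}^g=K_g^2\boldsymbol{R}^g\boldsymbol{\Gamma}_g^{-1}\boldsymbol{R}^g$. This uses the Hermitian symmetry of $\boldsymbol{R}^g$ and $\boldsymbol{\Gamma}_g$.

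\textbf{Step 4 (error).} For the error $\tilde{\boldsymbol{h}}^g=\boldsymbol{h}^g-\hat{\boldsymbol{h}}^g$, the orthogonality principle gives $\mathbb{E}\{\tilde{\boldsymbol{h}}^g(\boldsymbol{y}^g)\herm\}=K_g\boldsymbol{R}^g-K_g\boldsymbol{R}^g\boldsymbol{\Gamma}_g^{-1}\boldsymbol{\Gamma}_g=\boldsymbol{0}$. Hence the error is uncorrelated with $\hat{\boldsymbol{h}}^g$, and by joint Gaussianity it is also independent of it. Its covariance is then $\boldsymbol{R}^g$ minus the estimate covariance, giving the stated $\mathcal{CN}$ law.

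The main obstacle is not algebraic but in the bookkeeping of scalings. One must check that \eqref{eq:h_gl} and \eqref{eq:R_lg} carry the factors $\sqrt{\tau_{\mathrm{p}}}$ and $K_g$ consistently, so that $\boldsymbol{y}^g=K_g\boldsymbol{h}^g+\boldsymbol{n}_g$ holds exactly. One must also make explicit the assumption that user channels within a subgroup are mutually independent, which Step 1 relies on.
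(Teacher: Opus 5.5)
Your proposal is correct and follows essentially the same route as the paper's Appendix~A: you invoke the Gaussian conditional-mean formula, compute $\mathbb{E}\{\boldsymbol{h}^g(\boldsymbol{y}^g)\herm\}=K_g\boldsymbol{R}^g$ and $\mathbb{E}\{\boldsymbol{y}^g(\boldsymbol{y}^g)\herm\}=\boldsymbol{\Gamma}_g$, and then read off the covariances of the estimate and of the error. Your identity $\boldsymbol{y}^g=K_g\boldsymbol{h}^g+\boldsymbol{n}_g$, the explicit orthogonality check, and the stated independence of the user channels make rigorous what the paper only abbreviates, e.g.\ its shorthand $K_g\mathbb{E}\{\boldsymbol{h}_k\boldsymbol{h}_k\herm\}$ for the cross-covariance.
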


\begin{IEEEproof}
See Appendix~A.
\end{IEEEproof}

\subsection{MULTICAST TRANSMISSION AND PER-SUBGROUP SPECTRAL EFFICIENCY}

\textcolor{black}{Downlink transmissions are carried out to subgroups of
users rather than to individual users (unicast) or a single multicast
group. All multicast subgroups convey the \emph{same information stream}.
Subgrouping assigns different uplink pilots and downlink precoders to
subsets of users in order to exploit heterogeneity in their spatial
channel statistics, without creating additional services or contents.}

\textcolor{black}{In this setting, the downlink signal received at user
$k$ belonging to subgroup $g$ is given by}
\begin{equation}
      y_{k} 
      = \boldsymbol{h}_{k}^{\mathsf{H}} \mathbf{w}_{g}\varsigma_{g} 
      + \sum_{\substack{c=1 \\ c\neq g}}^{G}
      \boldsymbol{h}_{k}^{\mathsf{H}} \mathbf{w}_{c}\varsigma_{c} 
      + n_k,
      \label{eq:y_k}
\end{equation}
\textcolor{black}{where $n_k \sim \mathcal{CN}(0,\sigma_{\mathrm{d}}^{2})$
denotes the additive white Gaussian noise at user $k$,
$\mathbf{w}_{g}\in\mathbb{C}^{M}$ is the precoding vector used for subgroup
$g$, and $\varsigma_{g}$ is the data symbol intended for all users in that
subgroup, satisfying $\mathbb{E}\{|\varsigma_{g}|^{2}\}=1$ and
$\mathbb{E}\{\varsigma_{g}\varsigma_{c}^{\ast}\}=0$ for all $g\neq c$
(each multicast stream employs a distinct modulation and coding scheme,
rendering symbols intended for different subgroups uncorrelated).}

\textcolor{black}{In~\eqref{eq:y_k}, the first term corresponds to the
desired multicast signal, while the second term accounts for
interference arising from transmissions intended for the other
subgroups. In line with standard massive MIMO operation, users do not
acquire instantaneous downlink \gls{csi} and instead rely on the
statistical expectation
$\mathbb{E}\{\boldsymbol{h}_{k}^{\mathsf{H}}\mathbf{w}_{g}\}$ as a
deterministic equivalent of the effective channel, owing to channel
hardening. Accordingly, the achievable downlink \gls{se} can be
characterized using the well-known hardening bound, obtained by treating
interference and noise as uncorrelated disturbances
\cite{2016Marzetta,2017Bjornsonbook,2018ChenHardening,2020Polegre}.}

\textcolor{black}{The achievable spectral efficiency of user $k$ in subgroup
$g$ is given by}
\begin{equation}
    \xi_{k}
    = \left(1 - \frac{\tau_{\mathrm{p}}}{\tau_{\mathrm{c}}}\right)
      \log_{2}\big(1+\gamma_{k}\big),
    \label{eq:SE_k}
\end{equation}
\textcolor{black}{where the effective signal-to-interference-plus-noise
ratio (SINR) $\gamma_{k}$ is expressed as}
\begin{equation}
        \gamma_{k}
        = \frac{ \big|\mathbb{E}\{\boldsymbol{h}_{k}^{\mathsf{H}}\mathbf{w}_{g}\}\big|^{2} }
        {\displaystyle 
            \sum_{c=1}^{G} \mathbb{E}\!\left\{ \left|\boldsymbol{h}_{k}^{\mathsf{H}}\mathbf{w}_{c}\right|^{2} \right\}
            - \big|\mathbb{E}\{\boldsymbol{h}_{k}^{\mathsf{H}}\mathbf{w}_{g}\}\big|^{2}
            + \sigma_{\mathrm{d}}^{2} }.
    \label{eq:SINR_k}
\end{equation}
{\color{black}This expression follows the standard hardening bound commonly used in massive MIMO systems.}

\textcolor{black}{Since all users in subgroup $g$ decode the same multicast
stream, the spectral efficiency of that subgroup is constrained by the
user experiencing the lowest individual spectral efficiency. It is
therefore given by}
\begin{equation}
    \Xi_{g} = \min_{k\in\mathcal{K}_{g}} \xi_{k}.
    \label{eq:SE_g}
\end{equation}

\subsection{SUBGROUP PRECODING}

\textcolor{black}{\Gls{cb} is adopted as the downlink precoding scheme in the considered multicast massive MIMO system. This precoding scheme is adopted as a baseline to evaluate the \gls{se} under different multicast subgrouping configurations. Each multicast subgroup $g$ is served by a dedicated precoding vector $\mathbf{w}_g$ constructed from the corresponding subgroup channel estimate. The transmit signal intended for subgroup $g$ is scaled by a non-negative power coefficient $\rho_g$, which denotes the downlink transmit power allocated to that subgroup.}

The \gls{cb} precoder used to transmit the multicast signal to subgroup $g$ is defined as
\begin{equation}
    \mathbf{w}_{g}
    = \sqrt{\rho_{g}}\,
    \frac{\hat{\boldsymbol{h}}^{g}}
    {\sqrt{\mathbb{E}\{\|\hat{\boldsymbol{h}}^{g}\|^2\}}},
    \label{eq:MR}
\end{equation}

\section{LSTM-ASSISTED MULTICAST SUBGROUPING}
\label{sec:LSTM-assisted}

{\color{black}In contrast to heuristic or exhaustive subgroup selection methods, we propose a data-driven framework to infer the number of multicast subgroups directly from spatial channel statistics.}

{\color{black}For clarity, the proposed LSTM-assisted multicast subgrouping method can be summarized in three main steps: \emph{(i)} per-snapshot PCA-based preprocessing to obtain a fixed-length representation of the user covariance matrices, \emph{(ii)} sequential processing using an LSTM encoder to capture latent spatial dependencies among users, and \emph{(iii)} multi-output classification through a dense decoder to estimate the number of multicast subgroups.}

\textcolor{black}{The \gls{lstm}-assisted model aims at estimating the most likely number of multicast subgroups for a given snapshot of multicast users in a \gls{mamimo} scenario. Unlike prior learning-based approaches that rely on fixed-size feature representations or scenario-dependent inputs, the proposed framework explicitly leverages second-order spatial channel statistics and processes variable-length user sets through a sequential encoding mechanism. This enables the model to infer the number of multicast subgroups in a manner that reflects the underlying spatial structure of the user distribution, rather than fitting a direct mapping between input features and a predefined subgroup count. For each input data example $i$, we define a matrix $\boldsymbol{S}_i \in \mathbb{R}^{K_i\times(2MM)}$ that contains the real-valued representations of the spatial covariance matrices of the $K_i$ multicast users, obtained by stacking the real and imaginary parts. This matrix constitutes the input to the \gls{lstm}-assisted multicast subgrouping framework and enables the BS to determine the corresponding number of subgroups $G$.}

\textcolor{black}{The \gls{lstm}-assisted model first estimates the optimal
number of multicast subgroups $G$, and subsequently determines the
collection of subgroups $\{\mathcal{K}_g\}_{g=1}^{G}$ by assigning each
user to the subgroup whose \emph{collective} fingerprint (i.e., subgroup
centroid) is closest to its own. The underlying rationale is that pairs
of users whose fingerprints yield values of
$f_{\mathrm{d}}(\boldsymbol{R}_k,\boldsymbol{R}_{k'})$ close to $1$ are
expected to be spatially proximate and to exhibit similar large-scale
channel statistics. Grouping such users within the same multicast subgroup improves channel estimation quality and the effectiveness of the resulting precoders~\cite{2018Riera}.}

\textcolor{black}{The following subsections describe the neural architecture, input preprocessing stage, and training procedure used to implement this strategy.}

\textcolor{black}{To quantify the similarity between users' spatial statistics, we adopt the normalized trace-based metric
\begin{equation}
f_{\mathrm{d}}(\boldsymbol{R}_k, \boldsymbol{R}_{k'}) =
\frac{\mathrm{tr}\!\left(\boldsymbol{R}_{k}\boldsymbol{R}_{k'}\right)}
{\mathrm{tr}\!\left(\boldsymbol{R}_{k}\right)
 \mathrm{tr}\!\left(\boldsymbol{R}_{k'}\right)},
\label{eq:metric} 
\end{equation}
where $R_k$ and $R_{k'}$ denote the spatial covariance matrices of users $k$ and $k'$, respectively.
The metric $f_{\mathrm{d}}(\boldsymbol{R}_k, \boldsymbol{R}_{k'})$ takes values in $[0,1]$ and measures the alignment of the dominant eigenspaces of the two covariance matrices. Values close to one indicate strongly aligned spatial signatures, while values close to zero correspond to
nearly orthogonal channel statistics. This property makes $f_{\mathrm{d}}(\cdot)$ particularly suitable for identifying statistically homogeneous subsets of users for multicast subgrouping.}

\subsection{LSTM-ASSISTED MODEL ARCHITECTURE} \label{subsec:architecture}

\begin{figure}[!t]
\centering
\includegraphics[width=0.95\linewidth]{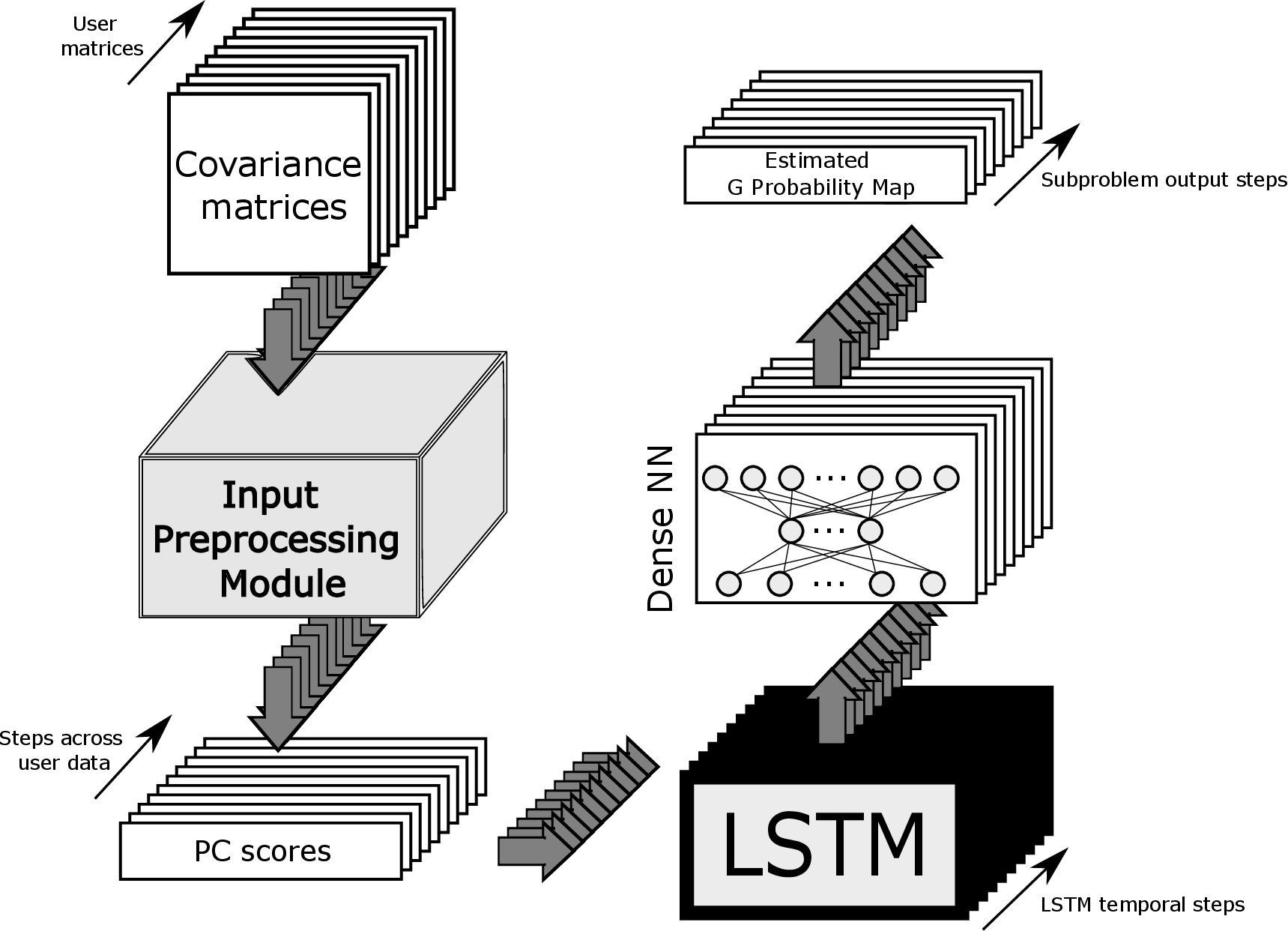}
\caption{\textcolor{black}{Architecture of the LSTM-assisted multicast subgrouping model. User covariance matrices are first processed by the input preprocessing module, where a per-snapshot PCA transformation produces the PC-score representation. The resulting user sequence is processed by the LSTM layer, and the dense neural network maps the LSTM outputs into probability distributions over the possible number of multicast subgroups.}}
\label{Fig:LSTM_model}
\end{figure}

\textcolor{black}{Let the number of multicast users $K_i$ vary across input snapshots, and let $K_{\text{max}}$ denote the maximum number of multicast users considered in the system. Fig.~\ref{Fig:LSTM_model} provides a global view of the proposed LSTM-assisted multicast subgrouping architecture. The use of an LSTM architecture is motivated by its ability to process variable-length sequences and capture dependencies across ordered user representations, which is critical for modeling the spatial structure of multicast user sets. The processing chain is composed of three main blocks: an input preprocessing module, which converts the user covariance matrices into a fixed-size PC-score representation; an LSTM layer, which sequentially processes the user descriptors and captures the latent spatial structure of the multicast user set; and a dense neural network, which maps the LSTM outputs into probability distributions over the possible number of multicast subgroups.}

\textcolor{black}{Since the number of multicast users may vary between different snapshots, the proposed architecture is designed around the maximum problem dimension $K_{\text{max}}$, allowing the framework to operate under heterogeneous multicast user configurations while maintaining a consistent internal representation. In particular, both the dimensionality of the PCA-based representation and the memory size of the LSTM layer are linked to $K_{\text{max}}$, enabling the model to process variable-size user distributions without introducing artificial representational bottlenecks. This design preserves scalability without increasing computational complexity as the number of multicast users grows.}

\textcolor{black}{Fig.~\ref{Fig:PCA} further details the PCA-based preprocessing stage, whereas Fig.~\ref{Fig:LSTM_Training_structure} illustrates how the intermediate LSTM outputs are used during the multi-output training process. Together, these figures provide a complete and coherent view of the proposed framework, from input representation to sequential encoding and final subgroup-number estimation.}

\subsubsection{INPUT PREPROCESSING MODULE} \label{subsec:pca}

The LSTM-based model utilizes \gls{pca} as a dimensionality reduction technique to handle the large size of the spatial covariance matrices associated with each data example. Each data example $i$ contains multiple covariance matrices, corresponding to the $K_i$ users in the scenario. Unlike the conventional approach, where \gls{pca} is applied once across the entire dataset, here, a separate \gls{pca} is performed for each new set of covariance matrices $\boldsymbol{S}_i$. This approach results in a different transformation for each scenario, tailored specifically to the distribution of multicast users within that example $i$.

Every new set of spatial covariance matrices $\boldsymbol{S}_i$ introduced to the model requires performing a separate PCA and applying a different transformation. Since each scenario $i$ (snapshot of users' distribution) creates its own dataset of spatial covariance matrices, this approach allows us to capture the most relevant principal components specific to that particular distribution of multicast users. The main goal is to reduce the size of the input data $\boldsymbol{S}_i \in \mathbb{R}^{K_i \times (2MM)}$ (i.e., the covariance matrices of $K_i$ users), to a fixed set of principal components denoted by $\boldsymbol{P}_i \in \mathbb{R}^{K_i \times K_{\text{max}}}$, which serve as input to the LSTM layer. Fig. \ref{Fig:PCA} illustrates the input dimensionality reduction using PCA for each data example which is achieved in three steps: a) the PCA output $\boldsymbol{V}_i \in \mathbb{R}^{K_i \times (2MM)}$ presents the principal components for the $K_i$ users maintaining the dimensions of the input $\boldsymbol{S}_i$, b) the projection of the principal components on the covariance matrices $\langle\boldsymbol{S}_i,\boldsymbol{V}_i\rangle$ results in the PC (principal components) scores matrix $\boldsymbol{P'_i} \in \mathbb{R}^{K_i \times K_i}$, and c) if the number of users $K_i < K_{\text{max}}$, the resulting PCA output matrix contains only $K_i$ principal components for the $K_i$ multicast users, thus, the $K_i$ vectors corresponding to the $K_i$ multicast users are padded with zeros to maintain a consistent input size for the LSTM, $\boldsymbol{P_i} \in \mathbb{R}^{K_i \times K_{\text{max}}}$, where the number of multicast users $K_i$ is variable but the number of principal components $K_{\text{max}}$ is fixed. This process ensures that the LSTM-based model can dynamically adjust to different numbers of users in the cell, making it robust in learning from varied spatial structures and clusters of users' formations across different cases.

\begin{figure}[!t]
\centering
\includegraphics[width=0.95\linewidth]{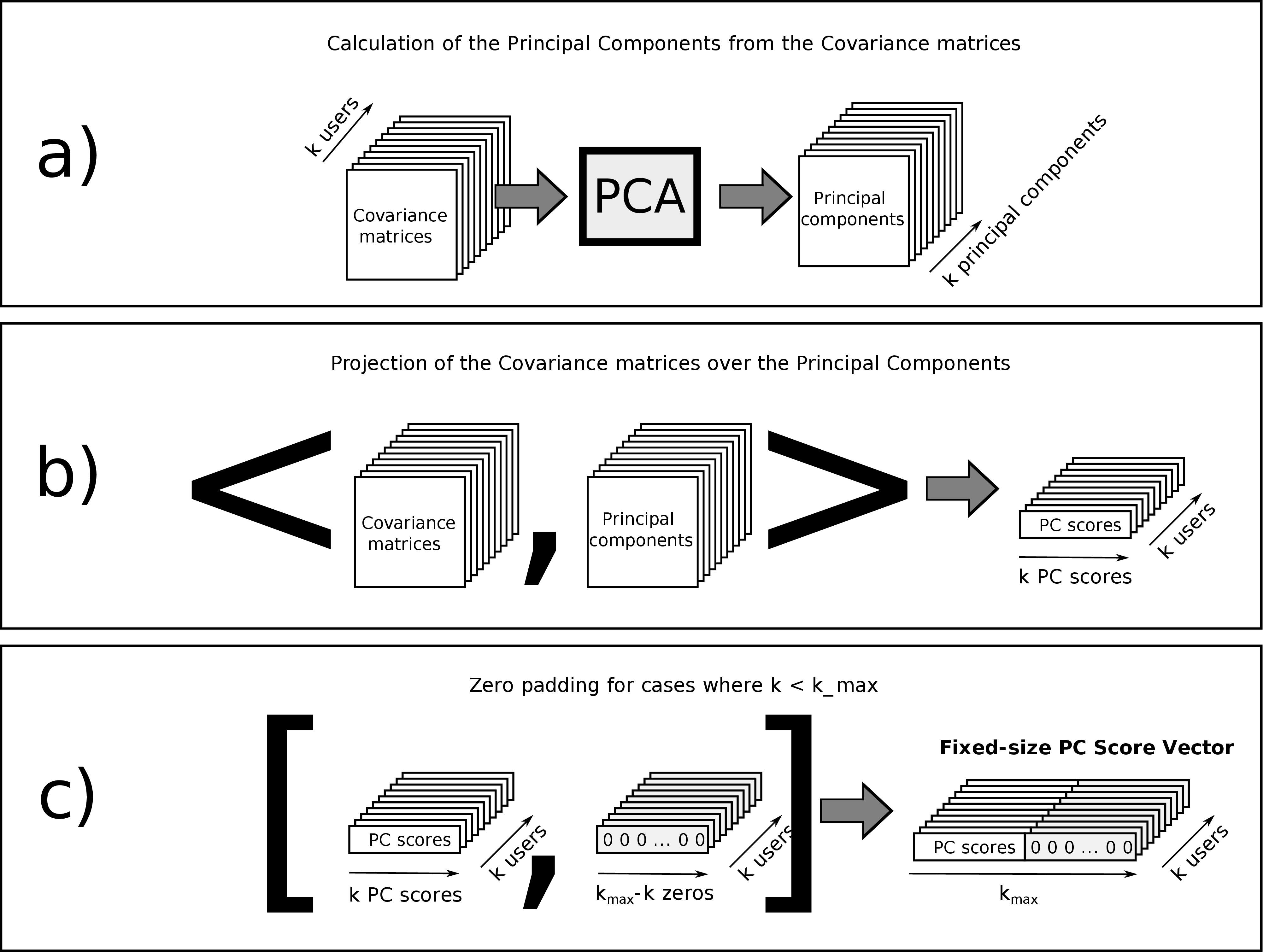}
\caption{Input dimensionality reduction using PCA for each data example in three steps: a) the PCA of the covariance matrices, b) the projection of the principal components on the covariance matrices results in the PC scores, and c) the $K_i$ multicast
users are padded with zeros to maintain a consistent input size of $K_{\text{max}}$ principal components for the LSTM.}
\label{Fig:PCA}
\end{figure}

\textcolor{black}{For a generic snapshot with $K_i \leq K_{\text{max}}$ users, this procedure yields a matrix $\boldsymbol{P}_i \in \mathbb{R}^{K_i \times K_{\text{max}}}$. During the multi-output training process described in the following subsection, snapshots with $K_{\text{max}}$ users are used so that the LSTM can learn subgroup-number estimation problems for all intermediate user subset sizes, from $2$ to $K_{\text{max}}$.}

\subsubsection{LSTM LAYER}
The output data of the preprocessing network $\boldsymbol{P}_i$ will be the input of the following layer in the model, an LSTM layer with a memory size of $K_{\text{max}}$ units. This layer processes user-related information sequentially, storing the relevant elements of each user in its memory in terms of the spatial structure that forms the clusters of multicast users in the cell. Each time step in the LSTM corresponds to processing the spatial covariance matrix information of a new user $k$, and the LSTM's ability to retain temporal information allows the model to preserve important details about the users' spatial arrangement as it processes each input.

\subsubsection{DENSE NEURAL NETWORK}
The output of the LSTM layer is the input to a dense neural network, which is composed of two hidden layers followed by an output layer. Table \ref{table:Dense_NN} presents the activation functions and the number of units of each layer.

The third and final layer using \texttt{softmax} activation, is designed to generate a probability distribution over the number of multicast subgroups $G_i$, which can range from $1$ to $K_{\text{max}}$. The purpose of this dense network is to take the information processed by the LSTM and estimate the value of $G_i$ that maximizes the probability of correctly representing the spatial cluster distribution of the multicast users.

\begin{table}[t]
\centering
\caption{Dense neural network activation functions and size.}
\begin{tabular}{|l|l|l|}
\hline
\textbf{Layer} & \textbf{Activation function} & \textbf{Size (units)}\\ \hline
First layer & \texttt{tanh}  & $K_{\text{max}}/2$\\ \hline
Second layer & \texttt{sigmoid} & $3K_{\text{max}}/4$\\ \hline
Third layer & \texttt{softmax} & $K_{\text{max}}$\\ \hline
\end{tabular}
\label{table:Dense_NN}
\end{table}

\subsection{TRAINING PROCESS} \label{subsec:training_process}

The training process for the LSTM-based subgrouping model begins with preprocessing the training data through PCA.  As mentioned above, this step reduces the spatial covariance matrix size of the example data from  $\boldsymbol{S}_i \in \mathbb{R}^{K_i \times (2MM)}$ to $\boldsymbol{P}_i \in \mathbb{R}^{K_i \times K_{\text{max}}}$. We design the training dataset to always include $K_{\text{max}}$ users, including scenarios with different number of spatial clusters of users and also several spatial cluster radii. Thus, the input matrix that serves as the data fed into the LSTM layer $\boldsymbol{P}_i \in \mathbb{R}^{K_{\text{max}} \times K_{\text{max}}}$ is created for each data example $i$ (snapshot). Note that this input data contains $K_{\text{max}}$ principal component values for each of the $K_{\text{max}}$ multicast users.

As presented in Fig.~\ref{Fig:LSTM_Training_structure}, the LSTM layer is configured to produce an intermediate hidden representation at each time step, corresponding to the sequential processing of each user's data. Although the first user's output is discarded, the subsequent $K_{\text{max}} - 1$ outputs are passed through the dense neural network, which estimates the number of multicast subgroups $G_i$ as multicast users are progressively added from $2$ to $K_{\text{max}}$.

\textcolor{black}{
This training strategy allows the framework to learn subgrouping problems of progressively increasing size within a single forward pass. As each new multicast user is sequentially introduced into the LSTM memory, the decoder network is applied to all valid intermediate outputs, enabling the model to estimate the number of multicast subgroups for user subsets ranging from $2$ to $K_{\text{max}}$. Consequently, the proposed architecture jointly learns the evolution of the subgroup structure as additional users are incorporated into the multicast scenario, improving its ability to generalize across heterogeneous user distributions and variable problem dimensions.
}

\subsubsection{OUTPUT AND LABELING}
To create the target outputs for training, we leverage the fact that each synthetic user $k$ is labeled during the data generation process, indicating the spatial cluster $g$ to which it belongs. This labeling allows us to easily compute the true number of spatial clusters in each user distribution $i$. These true cluster values are encoded using one-hot encoding, making them compatible with the \texttt{softmax} output of the dense network. Therefore, the model is trained to optimize across $K_{\text{max}}-1$ outputs, learning to predict the number of spatial clusters $G_i$, for problems involving scenarios with users distributions varying from $2$ to $K_{\text{max}}$ users.

The resulting one-hot vectors define the supervised targets used in the multi-output training process. The corresponding loss function and optimization procedure are described next.

{\color{black}
\subsubsection{TRAINING CONFIGURATION}
The proposed LSTM-assisted multicast subgrouping framework is trained as a multi-output classification problem. For each input snapshot $i$, the LSTM sequentially processes the multicast users and generates intermediate outputs associated with subgrouping problems of increasing size. The first LSTM output is discarded, while the remaining outputs, corresponding to multicast user subsets ranging from $2$ to $K_{\text{max}}$, are fed into the dense decoder network. This strategy allows the model to jointly learn subgrouping structures for multiple multicast user configurations within a single forward pass.

The target output associated with each valid LSTM step is represented using one-hot encoding, where the active position corresponds to the true number of spatial clusters associated with the considered multicast user subset. Let $\boldsymbol{y}_{i,t}\in\{0,1\}^{K_{\text{max}}}$ denote the target vector for snapshot $i$ and LSTM output step $t$, and let $\widehat{\boldsymbol{y}}_{i,t}\in[0,1]^{K_{\text{max}}}$ be the corresponding softmax prediction generated by the dense network. The training process minimizes the categorical cross-entropy loss
\begin{equation}
\mathcal{L}_{\mathrm{CE}}
=
-\frac{1}{N\left(K_{\text{max}}-1\right)}
\sum_{i=1}^{N}
\sum_{t=2}^{K_{\text{max}}}
\sum_{g=1}^{K_{\text{max}}}
y_{i,t,g}\log\!\left(\widehat{y}_{i,t,g}\right),
\label{eq:lstm_loss}
\end{equation}
where $N$ denotes the number of training snapshots, $t$ indexes the valid LSTM output steps associated with multicast user subsets of sizes from $2$ to $K_{\text{max}}$, $g$ indexes the possible number of multicast subgroups, $y_{i,t,g}$ is the one-hot encoded target, and $\widehat{y}_{i,t,g}$ is the probability predicted by the softmax output of the dense decoder.

The network parameters are optimized using the RMSprop algorithm. Let $\boldsymbol{\theta}$ denote the set of trainable parameters of the LSTM and dense layers. At each training iteration, the parameters are updated according to
\begin{equation}
\boldsymbol{\theta}^{(n+1)}
=
\boldsymbol{\theta}^{(n)}
-
\eta
\frac{\nabla_{\boldsymbol{\theta}}\mathcal{L}_{\mathrm{CE}}}
{\sqrt{\boldsymbol{v}^{(n+1)}}+\epsilon},
\label{eq:rmsprop}
\end{equation}
where $\eta$ is the learning rate, $\epsilon$ is a numerical stability constant, and $\boldsymbol{v}^{(n+1)}$ is the exponentially weighted moving average of the squared gradients.

Training is performed for $10$ epochs using mini-batches containing $32$ snapshots. During each epoch, the order of multicast users within every snapshot is randomly shuffled before being processed by the LSTM. This randomization prevents the recurrent model from learning position-dependent artifacts and improves the ability of the framework to generalize across heterogeneous multicast user distributions and spatial configurations. Since LSTM networks process inputs sequentially, they may be sensitive to the ordering of the input sequence. To mitigate this effect, the order of multicast users is randomly shuffled at each epoch during training. This prevents the model from learning position-dependent artifacts and improves generalization. At inference time, the impact of input ordering is observed to be negligible due to the statistical nature of the covariance-based representation.

Fig.~\ref{Fig:LSTM_Training_structure} illustrates the complete training structure of the proposed LSTM-assisted multicast subgrouping framework.

}

\subsection{INPUT DATA REPLICATION ENHANCEMENT} \label{subsec:enhancement}

{\color{black}This enhancement is introduced as a practical mechanism to improve the stability of the sequential encoding process when handling variable-size user sets. The impact of this design choice is evaluated empirically in Section~IV.}

The initial configuration of the LSTM-assisted subgrouping model is designed to estimate the number of subgroups $G_i$ for a variable number of multicast users $K_i$ (ranging from 2 to $K_{\text{max}}$). In this setup, for a given number of users $K_i<K_{\text{max}}$, the LSTM layer processes each user $k$ individually, and the output is passed to the dense layer to predict $G_i$. Although this approach works correctly, an enhancement has been developed to significantly improve the model's performance.

By replicating the input data to always provide $K_{\text{max}}$ users, even if $K_i<K_{\text{max}}$, the system refines its ability to estimate the spatial cluster structure more effectively. Although this strategy introduces repeated information, it enhances the capacity of the LSTM layer to store and recall relevant details, allowing for more accurate spatial cluster estimation. The repetition acts as a mechanism for the LSTM-based model to refine its internal state, effectively reinforcing the learned representation of the user configuration. This approach leverages the LSTM's ability to retain and process information over multiple time steps. For instance, if there are $K_i<K_{\text{max}}$ users, the data for those $K_i$ users is replicated to fill all $K_{\text{max}}$ time steps, which helps the model focus on refining the cluster structure for a more precise estimation of the optimal number of subgroups $G_i$. The system, designed to handle variable $K_i$ sizes, benefits from this refinement process, improving its robustness and accuracy across different users' and spatial clusters' distributions.

In summary, this data replication technique helps the LSTM-based model refine its memory of the relevant spatial cluster information, resulting in an enhanced performance compared to the initial method, where the number of LSTM steps was directly tied to the actual number of users $K_i$. Note that this enhancement fixes the size of the LSTM layer to consist of $K_{\text{max}}$ units.

\begin{figure}[!t]
\centering
\includegraphics[width=\linewidth]{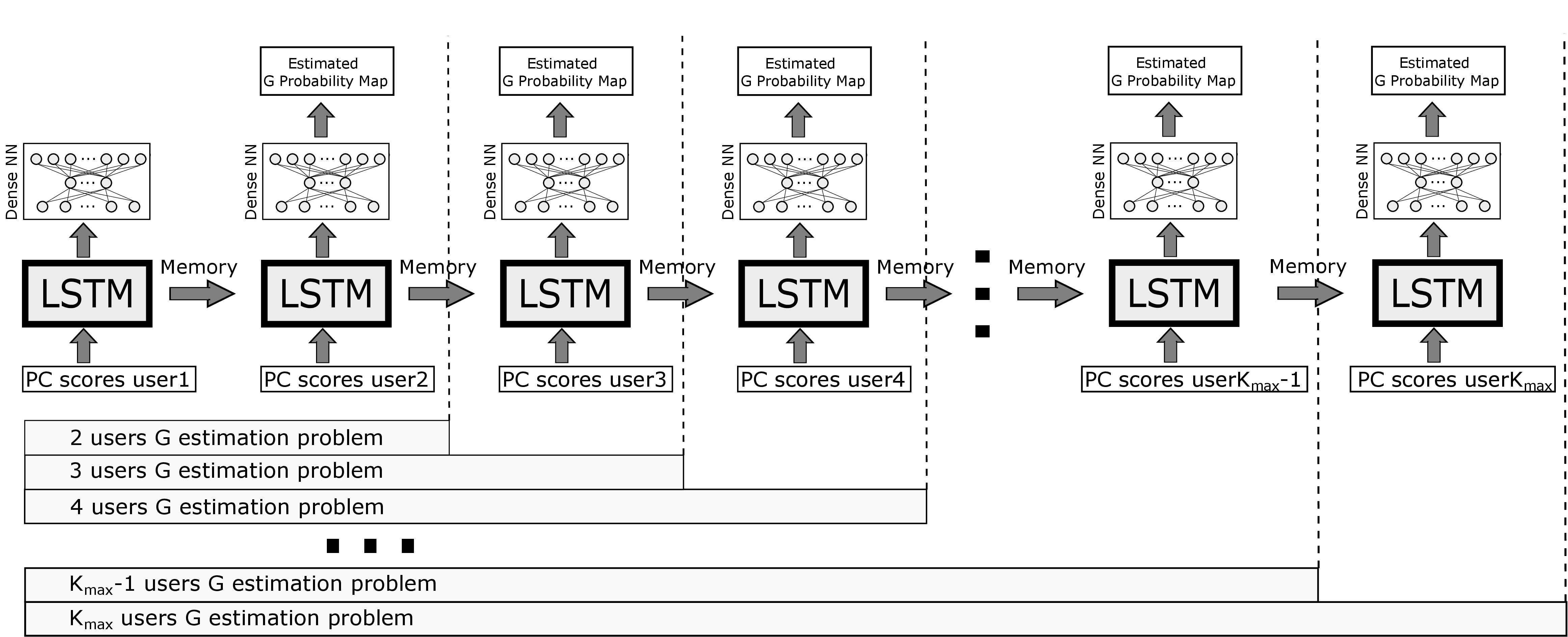}
\caption{Structure of the LSTM-assisted multicast subgrouping training process. The LSTM produces one output per processed user. The first output is discarded, while the remaining outputs are passed through the dense decoder to estimate the number of multicast subgroups for user subsets of increasing size, from $2$ to $K_{\text{max}}$.}
\label{Fig:LSTM_Training_structure}
\end{figure}

\subsection{TRANSFER LEARNING FOR SPECTRAL EFFICIENCY ESTIMATION ENHANCEMENT}
\label{subsec:transfer_learning}

Initially, the LSTM-assisted model is trained to estimate the number of spatial clusters represented in the user distribution. However, the ultimate objective of the multicast subgrouping problem is to determine the number of subgroups $G_i$ that maximizes the sum \gls{se} $\Psi$. To address this objective, we propose a \gls{tl} extension, denoted as LSTM (TL-SE), which leverages the spatial representation learned by the pre-trained LSTM encoder while training a modified dense network to estimate the sum \gls{se} associated with the different feasible subgroup configurations.

Importantly, the subgroup number $G$ is neither provided as an input to the LSTM-based model nor fixed to a predefined value. For a snapshot $i$ containing $K_i$ multicast users, the model receives only the PCA-based representation of their spatial covariance matrices. The LSTM encoder then generates a latent representation of the spatial user distribution, which is processed by the dense output head to produce the vector
\begin{equation}
\widehat{\boldsymbol{\Psi}}_i
=
\left[
\widehat{\Psi}_{i,1},
\widehat{\Psi}_{i,2},
\ldots,
\widehat{\Psi}_{i,K_{\mathrm{max}}}
\right],
\label{eq:tl_se_output}
\end{equation}
where $\widehat{\Psi}_{i,g}$ denotes the predicted normalized sum \gls{se} associated with partitioning the users into $g$ multicast subgroups. Therefore, the output index $g$ represents a candidate value of $G$, rather than a subgroup number supplied externally to the network.

For a scenario containing $K_i$ users, only the subgroup configurations satisfying $1 \leq g \leq K_i$ are feasible. Accordingly, the number of multicast subgroups selected by the LSTM (TL-SE) model is given by
\begin{equation}
\widehat{G}_i
=
\underset{g\in\{1,\ldots,K_i\}}{\argmax}
\;
\widehat{\Psi}_{i,g}.
\label{eq:tl_se_selection}
\end{equation}

Hence, all feasible subgroup counts are evaluated simultaneously through the output layer in a single forward pass, and no default or previously selected value of $G$ is required.

\subsubsection{MODEL ARCHITECTURE}

The LSTM (TL-SE) model retains the PCA-based preprocessing module and the same LSTM structure and weights learned during the initial subgroup-number classification task. During the transfer-learning stage, the pre-trained LSTM weights are frozen, while the dense output network is trained specifically to estimate the normalized sum \gls{se} associated with the candidate subgroup configurations.

Unlike the original LSTM-assisted classifier, whose output units represent the probabilities associated with the possible numbers of spatial clusters, the output units of the LSTM (TL-SE) model have a regression-based interpretation. In particular, the $g$-th output unit estimates the normalized sum \gls{se} obtained when the $K_i$ users are divided into $g$ multicast subgroups.

The modified dense network maintains a three-layer structure with $K_{\mathrm{max}}/2$, $3K_{\mathrm{max}}/2$, and $K_{\mathrm{max}}$ units, respectively. The output layer therefore contains one unit for each possible subgroup count. Its activation function is changed from \texttt{softmax} to \texttt{sigmoid}, since the objective is no longer to generate a probability distribution over mutually exclusive classes, but to predict normalized sum-\gls{se} values for the different subgroup configurations. The sigmoid activation constrains each predicted value $\widehat{\Psi}_{i,g}$ to the same normalized range used for the training targets.

For snapshots with $K_i<K_{\mathrm{max}}$ users, the output units corresponding to $g>K_i$ represent infeasible subgroup configurations and are excluded from the subgroup-selection rule in~\eqref{eq:tl_se_selection}.

\subsubsection{TRANSFER LEARNING APPROACH}

In the original LSTM-assisted model, the intermediate outputs generated while users are sequentially processed can be used to estimate the number of spatial clusters for progressively larger user subsets. In contrast, obtaining a sum-\gls{se} target for each intermediate subset would require independently simulating the complete multicast transmission procedure for every subset size and every possible subgroup count. This would substantially increase the computational cost of dataset generation.

Consequently, the LSTM (TL-SE) problem is formulated as a global snapshot-level estimation task rather than as the multi-output subset-based classification task used in the original model. For each training snapshot $i$, the target output is a vector containing the normalized sum \gls{se} associated with the different feasible numbers of multicast subgroups:
\begin{equation}
\boldsymbol{\Psi}_i
=
\left[
\Psi_{i,1},
\Psi_{i,2},
\ldots,
\Psi_{i,K_i}
\right],
\label{eq:tl_se_target}
\end{equation}
where $\Psi_{i,g}$ is the normalized sum \gls{se} obtained after partitioning the users of snapshot $i$ into $g$ subgroups and evaluating the resulting multicast transmission. Thus, the model learns to predict the dependence of the achievable sum \gls{se} on the subgroup count without requiring $G$ as an input variable.

The dataset used for this task is smaller than the dataset employed to train the original LSTM-assisted classifier. Moreover, the sum-\gls{se} simulations cover only a limited range of user counts and spatial-cluster configurations, rather than the complete range from $2$ to $K_{\mathrm{max}}$ users and from $1$ to $K_{\mathrm{max}}$ spatial clusters. Training the complete architecture exclusively with this reduced dataset would therefore increase the risk of overfitting and limit its ability to generalize to previously unseen user distributions.

To mitigate this limitation, the LSTM encoder pre-trained on the original subgroup-number estimation task is reused as a fixed spatial-feature extractor. This encoder has already learned to represent the latent spatial structure of user distributions from a considerably broader collection of user counts, spatial-cluster numbers, and cluster configurations. The transfer-learning stage therefore only needs to train the dense regression head to map this latent representation to the sum-\gls{se} values associated with the candidate subgroup counts.

By preserving the spatial knowledge encoded by the pre-trained LSTM, the proposed transfer-learning strategy reduces the number of trainable parameters and mitigates overfitting to the smaller sum-\gls{se} dataset. The dense network can consequently focus on learning the relationship between the encoded spatial user structure, the candidate subgroup configurations, and the corresponding sum \gls{se}. At inference time, the model predicts the sum \gls{se} for all feasible values of $G$ and selects the subgroup count according to~\eqref{eq:tl_se_selection}.

{\color{black} \subsection{BASELINE PRECODING AND POWER CONTROL MODEL}
\label{subsec:precoding_power_model}

{\color{black}This subsection describes the baseline signal processing and power control models used exclusively for performance evaluation and not as part of the proposed learning-based subgrouping mechanism.}

\Gls{cb} is adopted as the baseline downlink precoding scheme to evaluate the spectral efficiency under different multicast subgrouping configurations. This choice is motivated by its low computational complexity, robustness to covariance uncertainty, and well-known asymptotic optimality properties in the large-antenna regime. In particular, as the number of \gls{bs} antennas increases, \gls{cb} aligns the transmitted signal with the dominant channel directions, while inter-group and intra-group interference terms vanish in expectation due to channel hardening and favorable propagation.

For finite antenna dimensions, \gls{cb} does not completely suppress co-channel interference among subgroups. Nevertheless, this residual interference is naturally captured by the spectral efficiency evaluation used throughout this work and does not bias the subgrouping selection. Indeed, clustering users with similar second-order channel statistics yields effective channels that are well matched to the corresponding subgroup precoders. As a result, \gls{cb} provides a representative and scalable reference for assessing the impact of multicast subgrouping.

More sophisticated multicast beamforming techniques based on optimization frameworks such as \gls{sdr}, \gls{sca}, or \gls{wmmse} can achieve higher performance under their native assumptions. However, these approaches require full per-user instantaneous \gls{csi} and entail computational complexity that scales cubically with the number of users and antennas, making them unsuitable for large-scale multicast scenarios under practical \gls{csi} acquisition constraints. In contrast, the use of \gls{cb} preserves linear
complexity in both $M$ and $K$, enabling the generation of ground-truth spectral efficiency values for a large number of subgrouping configurations and user realizations.

The multicast signal intended for subgroup $g$ is transmitted using the precoding vector
\begin{equation}
\mathbf{w}_{g}
= \sqrt{\rho_{g}}\,
\frac{\hat{\boldsymbol{h}}^{g}}
{\sqrt{\mathbb{E}\{\|\hat{\boldsymbol{h}}^{g}\|^2\}}},
\label{eq:MR}
\end{equation}
where $\hat{\boldsymbol{h}}^{g}$ denotes the estimated subgroup channel and $\rho_{g} \geq 0$ is the downlink transmit power allocated to subgroup $g$. The power coefficients $\{\rho_g\}$ are determined through scalable power control procedures described later and are used solely to evaluate the resulting spectral efficiency associated with each subgrouping configuration.

The UL pilot powers $\{q_{gk}\}$ determine the estimation quality, directly impacting the effective DL \glspl{sinr}; conversely, the DL subgroup powers $\{\rho_g\}$ set operating margins and thus the sensitivity to estimation errors. {\color{black}The UL/DL problems are therefore coupled. For clarity, we denote by $q_{gk}$ the uplink pilot transmit powers and by $\rho_g$ the downlink transmit powers associated with subgroup $g$.} We adopt the alternating optimization routine analyzed in our previous work \cite{2022delaFuente}: initialize $\{q_{gk}^{(0)}\}$ and $\{\rho_g^{(0)}\}$ (e.g., pathloss-aware pilots and fractional DL power from large-scale statistics); then iterate (i) \emph{UL step}: update $\{q_{gk}\}$ to improve a per-subgroup \gls{mmf} surrogate and refresh estimates/statistics; (ii) \emph{DL step}: update $\{\rho_g\}$ via an \gls{mmf} program (e.g., \gls{sca} with bisection on the \gls{sinr} target) under $\sum_g \rho_g \le P_{\mathrm{dl}}$; stop when the relative improvement falls below $\epsilon$ or at $T_{\max}$. This coordination preserves scalability and avoids duplicating algorithmic details already provided in \cite{2022delaFuente}. In the following, we formulate both the \emph{inter-subgroup} and \emph{intra-subgroup} power control problems.

\subsubsection{INTER-SUBGROUP MAX-MIN FAIR POWER CONTROL}

The inter-subgroup \gls{mmf} power allocation under a downlink sum-power constraint is formulated as
\begin{equation}
\begin{aligned}
\mathcal{P}_1:\quad 
& \underset{\{\rho_g \ge 0\}}{\text{maximize}} \;\;
\underset{g \in \mathcal{G}}{\min}\; \Xi_{g} \\
& \text{subject to} \;\; \sum_{g=1}^{G} \rho_{g} \le P_{\mathrm{dl}},
\end{aligned}
\label{eq:MMF_problem}
\end{equation}
where $P_{\mathrm{dl}}$ denotes the downlink transmit-power budget at the
\gls{bs} per coherence block.

Problem~\eqref{eq:MMF_problem} is nonconvex in the variables $\{\rho_g\}$. However, it can be reformulated in epigraph form by introducing an auxiliary variable, denoted by $\Gamma$, which serves as a quality-of-service constraint. The globally optimal solution can then be obtained in polynomial time by performing a bisection search over $\Gamma$ and solving, at each iteration, a convex feasibility problem. Implementation details of the \gls{mmf} bisection routine are provided in~\cite[Algorithm~2]{2022delaFuente}.

\subsubsection{INTRA-SUBGROUP MAX-MIN FAIR POWER CONTROL}

To compute a near-optimal uplink pilot-power allocation that maximizes
the minimum spectral efficiency within each multicast subgroup, we
adopt a heuristic routine based on the approach described
in~\cite[Algorithm~1]{2022delaFuente}. The procedure is initialized using
the sub-optimal pilot-power setting proposed in~\cite{2018SadeghiTWC1},
given by
\begin{equation}
q_{gk} = \frac{1+\beta_{k}\rho_g}{\beta_{k}^2}\,\Upsilon_g,
\quad \forall\, g,\; k \in \mathcal{K}_g,
\label{eq:q_uncorr}
\end{equation}
where
\begin{equation}
\Upsilon_g = \min_{k \in \mathcal{K}_g}
\frac{P_{\mathrm{ul}}\beta_{k}^2}{1+\beta_{k}\rho_g},
\end{equation}
and $P_{\mathrm{ul}}$ denotes the uplink pilot transmit-power budget per user.

For each multicast subgroup, the algorithm iteratively updates the uplink pilot-power coefficients $\{q_{gk}\}$ to increase the minimum per-user SINR. At each iteration, a suboptimal target value
$q_{gk}^{\star}$ is computed for each user $k \in \mathcal{K}_g$ so as to equalize the SINR across the subgroup. The pilot powers are then updated according to the normalization rule
\begin{equation}
q_{gk} \leftarrow \mu_{k}\,
\frac{q_{gk}^{\star}}{\max\limits_{k \in \mathcal{K}_g} q_{gk}^{\star}},
\end{equation}
where $\mu_k$ denotes a step-size parameter.

Since the uplink pilot-power allocation and the downlink data-power
allocation are inherently coupled, the uplink optimization adopts an
a priori downlink power profile derived from large-scale fading
statistics to decouple the updates. Specifically, the downlink power
allocated to subgroup $g$ is given by
\begin{equation}
\rho_g = P_{\mathrm{dl}}\,
\frac{\beta_g^{\nu}}{\sum_{c=1}^{G} \beta_c^{\nu}},
\label{eq:p_g}
\end{equation}
where
\begin{equation}
\beta_g = \frac{1}{M}\sum_{k \in \mathcal{K}_g}
\Big[\mathrm{tr}(\boldsymbol{R}_{k})
- \mathrm{tr}\!\left(\boldsymbol{R}^g
- K_g^2 \boldsymbol{R}^g \boldsymbol{\Gamma}_{g}^{-1}
\boldsymbol{R}^g\right)\Big],
\end{equation}
and the tuning parameter $\nu \in [-1,1]$ controls the fairness of the
power allocation across subgroups.

After defining the inter- and intra-subgroup power control strategies, the performance metric of interest is the sum spectral efficiency of the multicast transmission, expressed as
\begin{equation}
\Psi = \sum_{g=1}^{G} K_g \Xi_g.
\label{eq:SumSE}
\end{equation}
Since both the uplink pilot powers and the downlink data powers are
optimized according to a max--min fairness criterion, the resulting
allocation equalizes the spectral efficiency across all multicast
users. Consequently, the sum spectral efficiency simplifies to
\begin{equation}
\Psi = K\,\overline{\xi},
\label{eq:SumSE2}
\end{equation}
where $\overline{\xi}$ denotes the common per-user spectral efficiency
for all $k \in \mathcal{K}$.
}

\subsection{DATASETS FOR TRAINING AND EVALUATION} \label{subsec:datasets}

{\color{black}This subsection summarizes the datasets used to train and evaluate the proposed framework, highlighting the key characteristics relevant to learning-based multicast subgrouping.}

The performance of the proposed LSTM-based subgrouping model has been evaluated using two distinct datasets: one for training and another for testing. Both datasets were designed to simulate realistic \gls{mamimo} multicasting scenarios, ensuring that the model could generalize well to various conditions and configurations found in practical systems.

\subsubsection{TRAINING DATA FOR THE ORIGINAL MODEL}

For the training phase, we generated synthetic datasets reflecting the spatial distribution and channel covariance matrices characterizing the multicast users in a single-cell \gls{mamimo} system. This dataset was help the model learn the optimal number of subgroups $G$, under different users' spatial distributions.

The data generation process involved varying the number of spatial clusters of multicast users and the number of users deployed in each cluster, simulating the associated channel covariance matrices for each user within the spatial clusters. Specifically, three different spatial cluster radii have been considered ($2.5$ m, $5$ m, and $15$ m) to reflect environments with varying spatial correlation and user densities. The number of spatial clusters of users ranged from $1$ to $K_{\text{max}}$.

For each combination of spatial cluster radio and number of spatial clusters, we have performed $150$ independent realizations, resulting in a comprehensive dataset consisting of $3 \times 128 \times 150 = 57600$ simulations.
Each simulation $i$ provides input data in the form of $\boldsymbol{S}_i$ that consists of $K_i$ channel covariance matrices $\boldsymbol{R}_k$, along with the ground truth number of spatial clusters deployed $G_{\text{true}}$. This extensive dataset allowed the model to learn from diverse channel conditions and user spatial distributions, making it robust and generalizable.

\subsubsection{TRAINING DATA OF TRANSFER LEARNING FOR SPECTRAL EFFICIENCY MODEL}
To train the enhanced model, it was necessary to generate a new training dataset that includes sum SE $\Psi$ values, similar to the structure of the test dataset. We have created $4680$ cases that reflect the same diversity in terms of spatial cluster radii ($2.5$ m, $5$ m, and $15$ m) and the number of users ($40$, $60$, $80$, and $100$ users) and spatial clusters ($1$, $2$, $4$, $5$, $24$, $30$, $60$, and $120$ spatial clusters). Each case is labeled with the sum SE $\Psi$ achieved using different numbers of multicast subgroups $G$, enabling the model to learn the relationship between the spatial clustering structure and the SE.

The target for training the enhanced model is the normalized sum SE value for each possible $G$. Specifically, for a dataset of users $i$, the sum SE $\Psi$ values are normalized as follows:
\begin{equation}
        \!\Theta_{i} \!=\! \frac{\Psi_i(G) - \Psi_i^{\text{min}}}{\Psi_i^{\text{max}} - \Psi_i^{\text{min}}},
\label{eq:SE_norm}        
\end{equation}
where $\Psi_i(G)$ is the sum SE using $G$ multicast subgroups, and $\Psi_i^{\text{min}}$ and $\Psi_i^{\text{max}}$ are the minimum and maximum sum SE values for that set of users $i$. This normalization ensures that the $\Theta_{i}$ is scaled between $0$ and $1$, making it compatible with the sigmoid activation function used in the final layer of the dense network.

During the \gls{tl} stage, the pre-trained \gls{lstm} encoder is frozen, and only the dense output head is fine-tuned. The model is trained to predict the normalized sum \gls{se} $\Theta_i$ in~(\ref{eq:SE_norm}). Since this is a continuous regression task, we adopt the mean squared error (MSE) as the loss function:
\begin{equation}
    \mathcal{L}_{\mathrm{MSE}}
    = \frac{1}{N}\sum_{i=1}^{N}
    \left( \widehat{\Theta}_{i} - \Theta_{i} \right)^{2},
\end{equation}
where $\widehat{\Theta}_{i}$ denotes the model output. The choice of MSE stabilizes training across different SE ranges and provides smooth gradients for fine-tuning the dense layers.

\subsubsection{TEST DATA}

The test dataset, consisting of $480$ simulation files, has been designed to provide a comprehensive evaluation of the LSTM-based subgrouping method. Unlike the training dataset, the test simulations are more computationally expensive, as they involve calculating the SE for every possible value of $G$ (the number of subgroups). This allows for a deeper assessment of the model's performance in estimating $G$ and its impact on system performance.

Each simulated test data involved configurations with varying numbers of users (ranging from $K_i=40$ to $K_i=120$ users per snapshot) and different users per spatial cluster ratios (e.g., for $K_i=40$ we have $1$ user per cluster and $40$ spatial clusters, $2$ users per cluster and $20$ spatial clusters, $4$ users per cluster and $10$ spatial clusters, ...). The same cluster radii of $2.5$ m, $5$ m, and $15$ m have been employed, ensuring that the test data captured different spatial densities. We have generated $4$ independent realizations for each configuration.

\subsubsection{COMPLEXITY IN SPECTRAL EFFICIENCY CALCULATION}
The key distinction between the test and training datasets lies in the sum SE calculation. In the test dataset, the sum SE is computed for each possible value of $G$, adding significant computational complexity. This additional step is essential for evaluating the full algorithm, as it allows for a comparison of the predicted $G$ with the sum SE obtained from different subgrouping methods.

The test dataset enables a thorough analysis of how well the complete process performs in maximizing the sum SE. The complete process consists of two steps: first, the estimation of the number of subgroups $G$, and second, the application of the K-means algorithm applying the metric defined in~\eqref{eq:metric}.

\section{NUMERICAL RESULTS}
\label{sec:results}
We consider a \gls{mamimo} cell where a single \gls{bs} equipped with a \gls{ula} of $M\!=\!64$ antennas delivers a multicast service over a coverage area with $200$~m radius. 
Within this region, users are placed in spatial clusters of $5$~m radius at uniformly random locations, and their nominal angles of arrival (AoAs) with respect to the \gls{ula} are computed accordingly. 
These positions and AoAs determine the spatial correlation among user channel vectors, yielding a network snapshot for evaluation. 
Table~\ref{Tab:sim_param} summarizes the simulation configuration adopted to generate the training and validation datasets. 
The path loss is modeled as $-32.4 \!-\! 20\log_{10}(f) \!-\! 37.6\log_{10}(d) \!+\! F$ (in dB), where $d$ denotes the two-dimensional distance (in meters) from the user to the \gls{bs}, $f$ is the carrier frequency, and $F$ represents log-normal shadowing with $6$~dB standard deviation; the intra- and inter-cluster shadowing correlations are set to $1$ and $0.1$, respectively. 
The \gls{tdd} frame length is $\tau_{\mathrm{c}}\!=\!200$ channel uses. 
The uplink pilot transmit power budget per user is $P_{\mathrm{ul}}\!=\!20$~dBm, and the \gls{bs} downlink power budget is $P_{\mathrm{dl}}\!=\!33$~dBm. 
The downlink power-control policy is configured to maximize the sum spectral efficiency using \gls{cb} precoding with tilt parameter $\nu\!=\!-0.1$ \cite{2022delaFuente}. 
Spatial covariance matrices are synthesized via a Gaussian local-scattering model: each user's multipath components follow a normal angular distribution centered at the nominal AoA, with $10^{\circ}$ standard deviation. 
Unless otherwise stated, we adopt a noise power spectral density of $-174$~dBm/Hz, a receiver noise figure of $7$~dB, an operating bandwidth of $20$~MHz, and a carrier frequency of $2$~GHz.
Since all users in a given multicast subgroup decode the same stream at the rate dictated by the worst user in that subgroup, the network-wide sum \gls{se} is a convenient metric to highlight the performance impact of subgrouping on multicast delivery. Note that pilot reuse across spatially well-separated subgroups corresponds to choosing a smaller number of multicast subgroups. Lower values of $G$ in our results, therefore, implicitly represent the performance of such pilot-reuse configurations.

We remark that the learning-based selection of $G$ is agnostic to the specific linear precoding rule. Although \gls{cb} is used to generate \gls{se} labels, the pipeline only requires consistent behavior across $G$, not a particular optimality of the precoder. Additional experiments using \gls{zf} precoding confirm that the predicted $G$ values remain stable, showing that the learning model captures structural patterns of the covariance matrices rather than artifacts specific to \gls{cb}.

\begin{table}[!t]
\renewcommand{\arraystretch}{1.2}
\centering
\caption{Simulation setup}
\begin{tabular}{l|c}
\bfseries Parameters & \bfseries Value\\
\hline
 {Number of BS antennas} &  {$64$}\\
 {Maximum DL transmit power} &  {$33$ dBm}\\
 {Maximum UL pilot transmit power} &  {$20$ dBm}\\
 {Path loss model [dB]} &  {$-32.4 - 20\text{log}_{10}(f)$}\\
&  {$- 37.6\log_{10}(d) + F$}\\
 {Angular standard deviation} &  {$10^{\circ}$}\\
 {User noise figure} &  {$7$ dB}\\
 {Noise power spectral density} &  {$-174$ dBm/Hz}\\
 {Cell radius} &  {$200$ m}\\
 {Shadowing standard deviation} &  {$6$ dB}\\
 {Shadowing intra-cluster correlation} &  {$1$}\\
 {Shadowing inter-cluster correlation} &  {$0.1$}\\
 {Carrier frequency} &  {$2$ GHz}\\
 {Operating bandwidth} &  {$20$ MHz}\\
 {Channel coherence samples} &  {$200$}\\
 {Precoding strategy} & CB\\
\hline
\end{tabular}
\label{Tab:sim_param}
\end{table}

\subsection{ASSESSMENT OF THE CDF OF THE SUM SE ACHIEVED BY THE LSTM-ASSISTED MULTICAST SUBGROUPING}

In this subsection, we present the results of the different models and methods for estimating the optimal number of multicast subgroups $G$ to maximize the sum SE $\Psi$. We evaluate the \gls{cdf} of the sum SE of the multicast users. The results were evaluated using a test set consisting of $480$ snapshots that include a generalization of the number of multicast users from $2$ to $120$ and the spatial clusters deployed from $1$ to $120$. 

The test dataset enables a detailed analysis of how well the complete process (estimation of the number of multicast subgroups $G$ and the application of the K-means algorithm) maximizes the sum SE $\Psi$. Specifically, we compare our LSTM-assisted approach with five alternative strategies:
\begin{itemize}
        \item  \textbf{Genie-aided multicast subgrouping}: The number of multicast subgroups $G$ is set equal to the number of spatial user clusters generated in the simulation. After determining this value of $G$, the K-means algorithm is applied to partition the $K$ users into the corresponding $G$ disjoint subgroups. This sub-optimal approach requires that the \gls{bs} exactly knows the users' deployment clusters. This is clearly an ideal assumption. This scheme does not necessarily maximize the sum \gls{se}.  
        \item \textbf{Unicast}: The number of multicast subgroups matches the number of multicast users, $G = K$. That is, one transmission per multicast user. 
        \item \textbf{Conventional multicast}: All users are served through a single multicast transmission, $G = 1$. 
        \item \textbf{Naive multicast subgrouping}: A random value of multicast subgroups $G \!\in\! \mathcal{K}\!=\!\{1, \ldots, K\}$ is selected, followed by the application of K-means.
        \item \textbf{DBSCAN multicast subgrouping}: A density-based clustering algorithm that does not require the number of multicast subgroups $G$ as input, since this method does not utilize K-means. However, DBSCAN is parameter-dependent, allowing for customization of the scenario according to the desired clustering level. Note that these results employ a fixed DBSCAN multicast subgrouping parameterization. DBSCAN is applied to the user-similarity matrix constructed from the metric in (\ref{eq:metric}). Two users are considered neighbors if their similarity exceeds a threshold $\epsilon$, and clusters are formed according to the standard DBSCAN rule with \emph{min\_samples}. In our simulations, we set $\epsilon = 0.55$ and \texttt{min\_samples} = 3, which provided a reasonable balance between cluster compactness and noise-point suppression across the evaluated scenarios. Users labeled as noise points are assigned to the nearest cluster according to the same similarity measure. Each resulting cluster is treated as a multicast subgroup.
\end{itemize}

We also present the \textbf{Upper bound multicast subgrouping} that represents the maximum achievable sum SE within the evaluated range of subgroup configurations under the scalable-CSI model (one pilot per subgroup and multicast CB precoding). This serves as the theoretical upper bound of the sum SE that is also fixed with the input data, and it is achieved through exhaustive search.
Next, we detail the three proposed LSTM-based multicast subgrouping methods:

\begin{itemize}
    \item  \textbf{LSTM (simple model) multicast subgrouping}: The SE predicted by the original neural network model, which estimates the value of \(G\) using a combination of dimensionality reduction, LSTM encoding, and a dense network.
    \item  \textbf{LSTM ($K_\mathrm{max}$) multicast subgrouping}: The SE obtained by the enhanced model replicates the input data to ensure $K_{\mathrm{max}}$ users, regardless of the actual number of users. This modification increases the robustness of the network.
    \item  \textbf{LSTM (TL SE) multicast subgrouping}: The model directly predicts the sum SE as a function of the subgroup configuration, and the optimal number of multicast subgroups is subsequently obtained by selecting the value of $G$ that maximizes the predicted SE. This model uses TL to optimize the sum SE directly, rather than estimating \(G\).
\end{itemize}

For clarity, we explicitly include the performance of conventional multicast ($G=1$) as a baseline in all figures. We split the set of test scenarios into two different subsets. This division allows us to clearly present the improvement over \emph{unicasting}, single-stream multicast, and \emph{DBSCAN} of the proposed learning-based subgrouping strategies.

Recent contributions have proposed low-complexity optimization-based multicast beamforming algorithms (e.g., \cite{2023Zhang}) as well as extremely lightweight heuristics (e.g., \cite{2026Zaher}). These works assume access to per-user instantaneous CSI and operate outside the scalable-CSI framework considered in this paper. Since our focus is on subgroup selection under the one-pilot-per-subgroup constraint, where instantaneous CSI is not available, such methods are not directly comparable in a fair manner. For completeness, we highlight that integrating our subgroup-selection mechanism with these multicast-optimized precoders would require redefining both the CSI acquisition model and the power-allocation strategy. This integration represents a promising research direction but falls beyond the scope of the present work.

\begin{figure}[!t]
\subfloat[Clustered scenario]
{\includegraphics[width=0.5\textwidth]{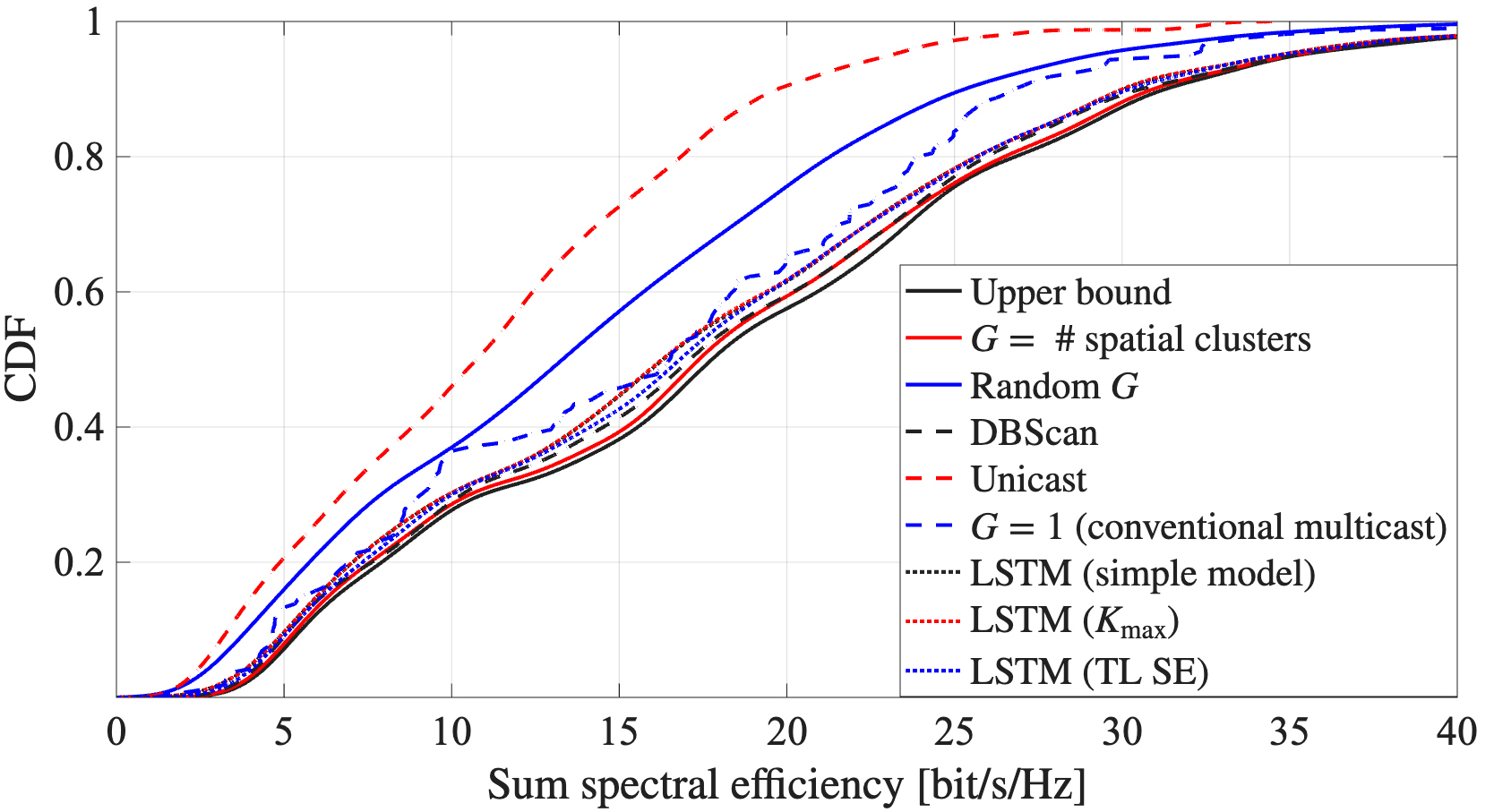}\label{Fig:1-8}}\\
\subfloat[Sparse scenario]
{\includegraphics[width=0.5\textwidth]{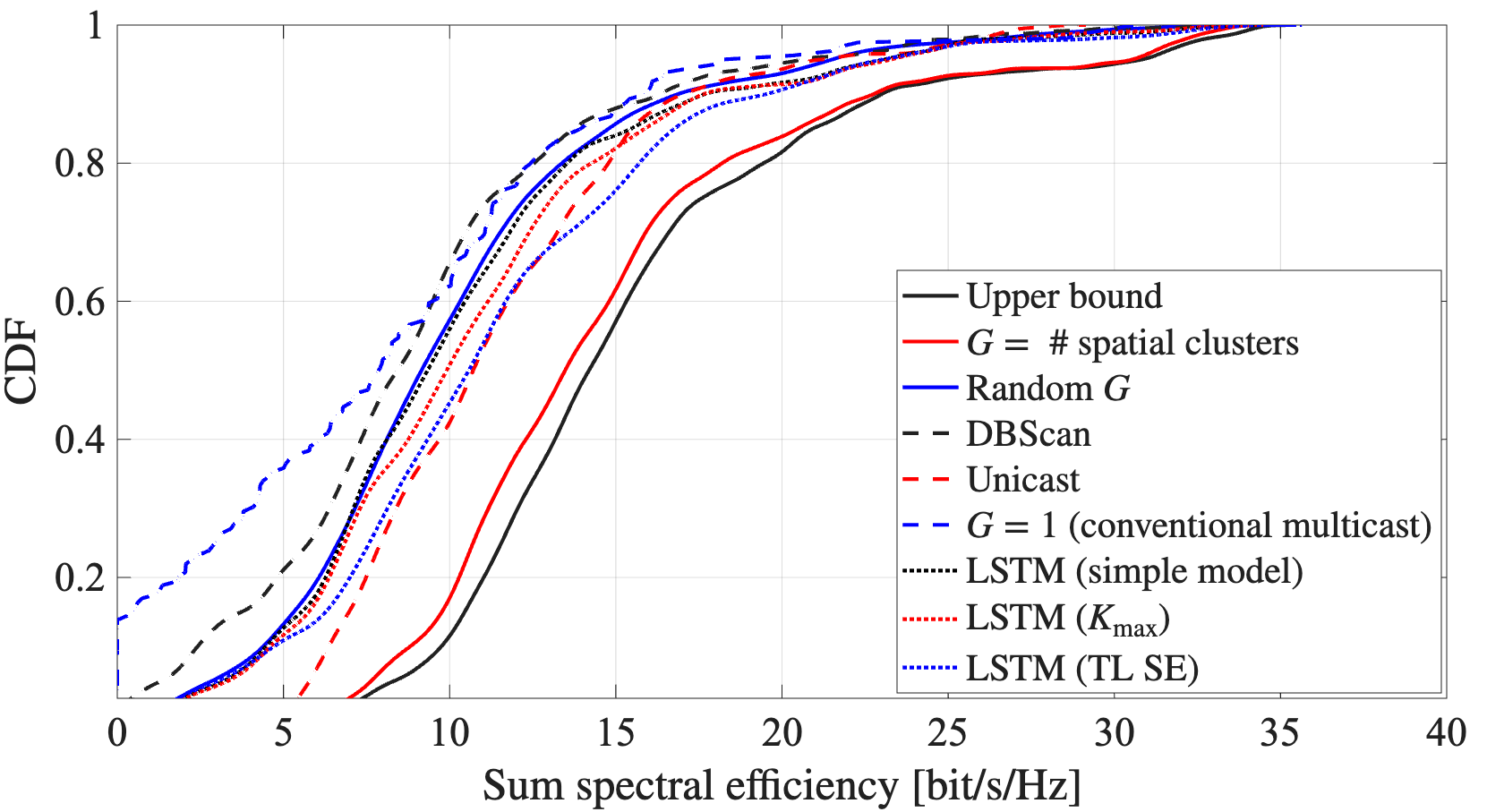}\label{Fig:10-30}}\\
\caption{CDF of the sum \gls{se}. We split the assessed scenarios into a) clustered scenarios where the multicast users are allocated into $\{1, 2, 4, 5\}$ spatial clusters independently of the number of users $K \in \{40, \ldots, 120\}$, and b) sparse scenario where the multicast users are allocated into spatial clusters of $\{1, 2, 4, 5\}$ users independently of the number of users $K \in \{40, \ldots, 120\}$.}
\label{Fig:cdf_percluster}
\end{figure}

Fig. \ref{Fig:1-8} illustrates the CDF of the sum SE when the multicast users $K \in \{40, \ldots, 120\}$ are allocated into a small number of spatial clusters ($\{1, 2, 4, 5\}$), that is, we consider a scenario with a few densely populated areas, referred to as the  \emph{clustered scenario}. Fig. \ref{Fig:1-8} shows that \emph{unicasting} is the worst strategy in this kind of scenario, while \emph{DBSCAN} results in a high performance close to the \emph{upper bound}. The three proposed neural network models present similar results to DBSCAN, enhancing both \emph{random $G$} and \emph{unicasting} techniques. Among the LSTM-based techniques, the \emph{LSTM (TL SE)} results in a slightly higher performance in the clustered scenarios. 

Fig. \ref{Fig:10-30} shows the CDF of the sum SE when the multicast users $K \in \{40, \ldots, 120\}$ are allocated into a large number of spatial clusters, and consequently, allocating a small number of users per cluster ($\{1, 2, 4, 5\}$), that is, we create a scenario with sparsely populated spatial areas which is close to randomly uniform distribution, named \emph{sparse scenario}. Fig. \ref{Fig:10-30} exhibits that the \emph{upper bound} is far from being achieved by the benchmark and the proposed methods. Particularly, \emph{unicasting} presents good performance in this kind of scenario, while the use of \emph{DBSCAN} severely degrades the sum SE performance in sparse scenarios. In sparse scenarios, user covariance matrices exhibit no significant spatial similarity. As a result, forming multicast subgroups yields limited or no gains, and unicast transmission becomes the optimal operating point. This explains why the TL-SE, DL-assisted, and DBSCAN methods converge towards unicast-level performance in Fig. 5b. This behavior is inherent to the nature of multicasting in \gls{mamimo} systems rather than a limitation of the proposed method.
{\color{black}When the spatial distribution of multicast users exhibits limited large-scale correlation, the proposed LSTM-assisted framework naturally predicts a number of multicast subgroups equal to the number of users ($G = K$). In this limiting case, each subgroup contains a single user, and the resulting operation reduces to conventional unicast transmission with per-user pilots and precoders. This behavior emerges from the joint action of the LSTM-based subgroup number estimation and the subsequent user grouping procedure, and is effectively determined by the inferred subgroup cardinality.}

We note that the proposed LSTM-based models, especially the \emph{LSTM (TL SE)}, result in a good trade-off between clustered and sparse scenarios, allowing the system to achieve as good performance as \emph{DBSCAN} in clustered scenarios, and as high performance as \emph{unicasting} in sparse scenarios.

Next, we evaluate the \gls{cdf} of the sum SE achieved employing the complete set of scenarios, varying the number of multicast users $K \in \{40, \ldots, 120\}$ and the number of spatial clusters to distribute multicast users between $\{1, \ldots, 120\}$, that is, from a single spatial cluster to a uniform distribution of users. Fig. \ref{Fig:cdf_general} presents the CDF of the sum SE obtained for the benchmark and proposed models and methods for a general set of scenarios. We observe that using the strategy of creating as many subgroups $G$ as the number of spatial clusters launched in the simulations is close to achieving the upper bound of the sum SE. {\color{black}Although the number of spatial clusters and the SE-optimal number of multicast subgroups are conceptually different, their values tend to align in scenarios with well-defined spatial clustering, which explains the near-optimal performance observed with this strategy.}

The proposed LSTM-based models outperform the sum SE results of the benchmark solutions in the complete range of the CDF. The LSTM (TL SE) method yields better sum SE performance across all percentiles. Static clustering requires scenario-dependent tuning (e.g., density thresholds) and exhibits complementary but limited operating regimes: DBSCAN excels in highly clustered deployments yet degrades in sparse ones, whereas unicast behaves conversely. Trained on diverse covariance patterns and fed by snapshot-specific \gls{pca}, the proposed \gls{lstm} maintains consistent performance across clustered, sparse, and mixed scenarios. The \gls{tl} head further narrows the gap to the upper bound by predicting the \gls{se} profile across $G$ without exhaustive evaluation.

Looking at Fig. \ref{Fig:percentile_general}, we can see these results for the sum SE probabilities of $ 90\%$, $ 80\%$, $ 50\%$, and $30\%$. These results show how, among the benchmark methods, \emph{unicasting} is close to the proposed LSTM-based models' results for the $90\%$-likely sum SE; however, \emph{unicasting} degrades its performance for the best $30\%$-likely sum SE. On the other hand, \emph{DBSCAN} is close to the proposed LSTM-based models' results for the best $30\%$-likely sum SE; however, \emph{DBSCAN} degrades its performance for the $90\%$-likely sum SE. Finally, using a \emph{random $G$} presents poorer sum SE results than the proposed LSTM-based methods in all percentiles.   

\begin{figure}[!t]
\centering
\includegraphics[width=\linewidth]{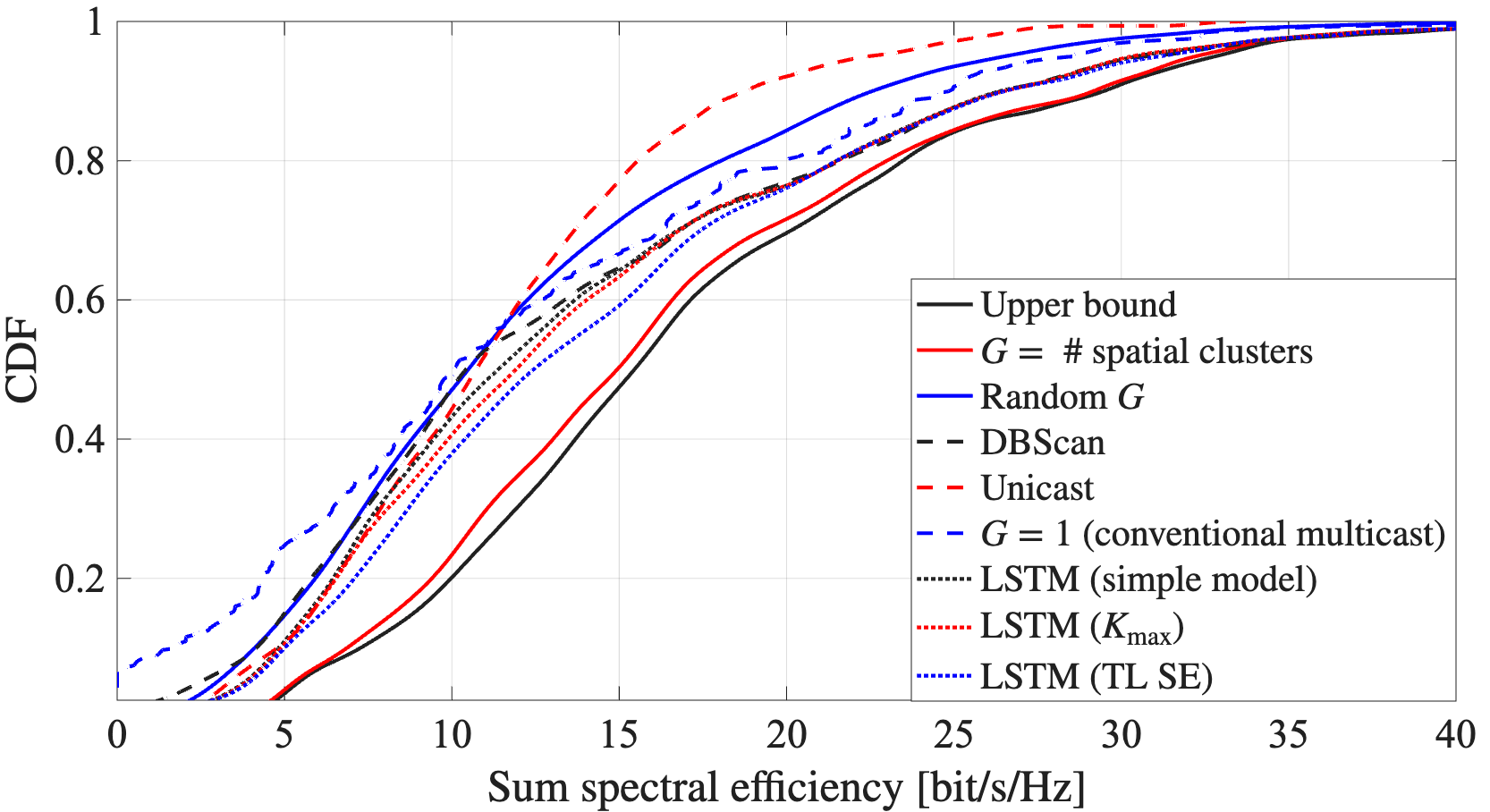}
\caption{CDF of the sum \gls{se} achieved by the assessed models and methods to select the number of multicast subgroups. Evaluation results for generalized scenarios where $K \in \{40, \ldots, 120\}$ and the number of spatial clusters of multicast users is between $\{1, \ldots, 120\}$.}
\label{Fig:cdf_general}
\end{figure}

\begin{figure}[!t]
\centering
\includegraphics[width=\linewidth]{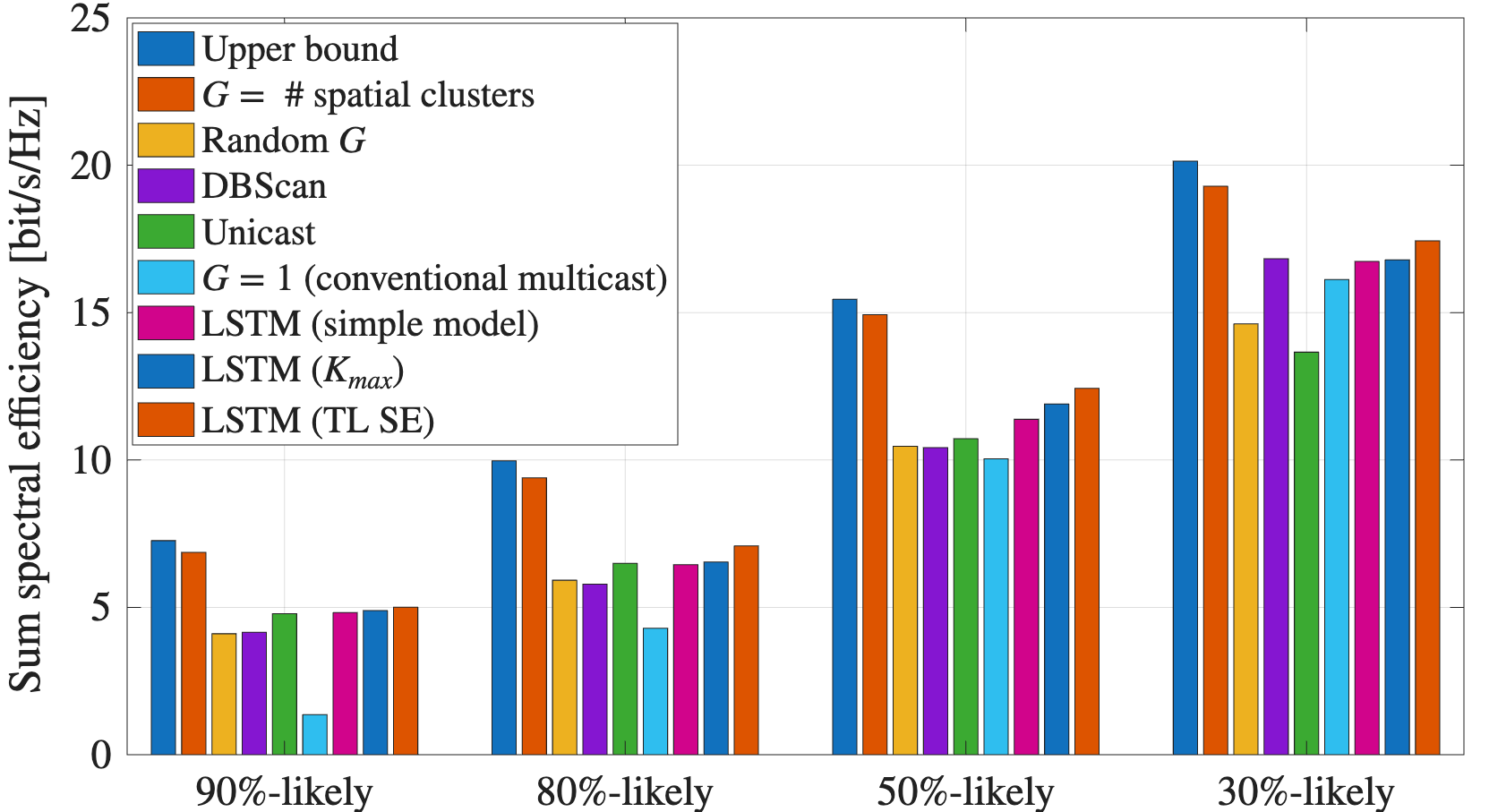}
\caption{$90\%$-likely, $80\%$-likely, $50\%$-likely, and $30\%$-likely sum \gls{se} achieved by the assessed models and methods to select the number of multicast subgroups. Evaluation results for generalized scenarios where $K \in \{40, \ldots, 120\}$ and the number of spatial clusters of multicast users is between $\{1, \ldots, 120\}$.}
\label{Fig:percentile_general}
\end{figure}

In conclusion, the proposed \emph{LSTM (TL SE)} provides a robust and near-optimal solution, independently of the multicast users' distribution, to estimate the number of subgroups $G$ to apply in the K-means algorithm to optimize the sum SE of the multicast service.

\subsection{STATISTICS OF THE SUM SE IN LSTM-ASSISTED MULTICAST SUBGROUPING}

The proposed LSTM-based multicast subgrouping models exhibit a different source of variability. Since the LSTM processes the user descriptors sequentially, its hidden and memory states may depend on the order in which the users' covariance information is introduced. Each row of the PC-score matrix represents one multicast user; therefore, changing the order of these rows changes the sequence processed by the recurrent network, even though the covariance matrices, the PC-score information, and the underlying spatial user distribution remain unchanged. Different permutations of the same user set may consequently lead to slightly different latent representations and estimates of the number of multicast subgroups.

To reduce this order dependence, the users within every training snapshot are randomly shuffled during each training epoch, as described in Section~\ref{subsec:training_process}. This exposes the network to different permutations of equivalent user sets and prevents it from relying on a fixed input ordering. Nevertheless, since the LSTM encoder is not permutation invariant by construction, some residual sensitivity to the user processing order may remain at inference time.

To quantify this effect, each LSTM-based model was evaluated $6000$ times using randomized user orders. In these evaluations, the user covariance information was preserved, and only the order in which the corresponding PC-score vectors were supplied to the LSTM was changed. This procedure allowed us to calculate the average, maximum, minimum, variance, and standard deviation of the resulting sum \gls{se} $\Psi$. The same number of evaluations was used for the \emph{Naive multicast subgrouping} method, although in that case the variability is caused by the random selection of $G$ rather than by the order of the input sequence. In contrast, \emph{Unicast}, \emph{Conventional multicast}, \emph{DBSCAN multicast subgrouping}, \emph{Genie-aided multicast subgrouping}, and \emph{Upper bound multicast subgrouping} produce a fixed sum-\gls{se} value for the considered evaluation dataset and system configuration.

{\color{black}
To complement the spectral-efficiency analysis, the deviation between the predicted number of multicast subgroups and the \gls{se}-optimal subgroup configuration is evaluated separately.
}

Table~\ref{table:sum_SE} summarizes the average sum \gls{se} and the corresponding statistical measures obtained for the evaluated models and methods.

\begin{table*}[t!]
\centering
\caption{Average sum SE and statistical measures for different models and methods.}
\resizebox{\textwidth}{!}{%
\begin{tabular}{|l|c|c|c|c|c|c|}
\hline
\textbf{Model/Method} & \textbf{Avg. $\Psi$ [bit/s/Hz]} & \textbf{\% of Upper bound} & \textbf{Variance} & \textbf{Sigma} & \textbf{Min $\Psi$ [bit/s/Hz]} & \textbf{Max $\Psi$ [bit/s/Hz]} \\
\hline
Unicast ($G = K$)& 11.3868 & 67.2 & 0 & 0 & 11.3868 & 11.3868 \\
Naive multicast subgrouping & 12.0397 & 71.1 & 0.011 & 0.1047 & 11.8185 & 12.4264 \\
{Conventional multicast ($G{=}1$)} & 12.0939 & 71.4 & 0 & 0 & 12.0939 & 12.0939 \\
DBSCAN multicast subgrouping & 13.4204 & 79.2 & 0 & 0 & 13.4204 & 13.4204 \\
LSTM (simple model) & 13.7961 & 81.4 & 0.0013 & 0.0358 & 13.6725 & 13.9528 \\
LSTM ($K_\mathrm{max}$) & 13.9747 & 82.5 & 0.0013 & 0.0354 & 13.8630 & 14.1181 \\
LSTM (TL SE) & 14.4383 & 85.2 & 0.0056 & 0.0749 & 14.1518 & 14.7759 \\
Genie-aided multicast subgrouping & 16.3907 & 96.8 & 0 & 0 & 16.3907 & 16.3907 \\
Upper bound multicast subgrouping & 16.9385 & 100 & 0 & 0 & 16.9385 & 16.9385 \\
\hline
\end{tabular}
}
\label{table:sum_SE}
\end{table*}

As shown in Table~\ref{table:sum_SE}, the maximum achievable sum \gls{se} within the evaluated range of multicast subgroup configurations is $16.9385$~bit/s/Hz. This value is obtained by selecting, for each evaluated scenario, the value of $G$ that maximizes the sum \gls{se} under the considered scalable-\gls{csi} model, based on one pilot per subgroup and multicast \gls{cb} precoding. Therefore, it constitutes an upper bound for the subgroup-selection problem analyzed in this work. It should not be interpreted as the absolute theoretical upper bound of multicast beamforming, which would require per-user instantaneous \gls{csi} and fully optimized multicast precoding.

The proposed and benchmark methods are evaluated according to how closely they approach this reference value. The \emph{Genie-aided multicast subgrouping} method uses a priori knowledge of the true number of spatial user clusters and applies K-means using this value. It achieves a sum \gls{se} of $16.3907$~bit/s/Hz, corresponding to $96.8\%$ of the evaluated upper bound. However, the true number of spatial clusters is known only as part of the simulated data-generation process and is not available in a practical system. Thus, the genie-aided result provides a reference for assessing the ability of the proposed models to infer a suitable number of multicast subgroups from the users' spatial covariance information.

Among the methods that do not rely on such a priori information, the LSTM-based approaches consistently outperform the benchmark strategies. The LSTM simple model, the LSTM ($K_{\mathrm{max}}$) model, and the LSTM (TL-SE) model achieve average sum-\gls{se} values of $13.7961$, $13.9747$, and $14.4383$~bit/s/Hz, respectively, whereas the best-performing practical benchmark, DBSCAN multicast subgrouping, achieves $13.4204$~bit/s/Hz. The LSTM (TL-SE) model provides the best average result among the practical methods, reaching $85.2\%$ of the evaluated upper bound.

The statistical results also show that the impact of changing the user processing order is limited. The standard deviations obtained by the LSTM simple model, the LSTM ($K_{\mathrm{max}}$) model, and the LSTM (TL-SE) model are $0.0358$, $0.0354$, and $0.0749$~bit/s/Hz, respectively. These values correspond to approximately $0.26\%$, $0.25\%$, and $0.52\%$ of their respective average sum-\gls{se} values.

More importantly, the residual order dependence does not alter the main comparative conclusions. Even the minimum sum-\gls{se} values observed over the $6000$ randomized user-order evaluations are $13.6725$, $13.8630$, and $14.1518$~bit/s/Hz for the LSTM simple model, the LSTM ($K_{\mathrm{max}}$) model, and the LSTM (TL-SE) model, respectively. All these minimum values remain above the $13.4204$~bit/s/Hz achieved by DBSCAN multicast subgrouping. Therefore, the proposed models outperform the main practical benchmark even under the least favorable user ordering observed in the evaluation. These results show that the randomization strategy used during training, together with the learned covariance-based representation, effectively limits the sensitivity of the recurrent models to the input sequence.

{\color{black}
\begin{table*}[!t]
\centering
\caption{{\color{black}Average number of predicted multicast subgroups and deviation from the optimal subgroup configuration.}}
\footnotesize
\begin{tabular}{|l|c|c|c|c|}
\hline
{\color{black}\textbf{Model/Method}} & {\color{black}\textbf{Avg. $G$}} & {\color{black}\textbf{Avg. $|G - G^\star|$}} & {\color{black}\textbf{Avg. $\frac{|G - G^\star|}{K}$}} & {\color{black}\textbf{Std. $(G - G^\star)$}}\\
\hline
{\color{black}Unicast ($G = K$)} & {\color{black} 80.0 }&{\color{black} 67.9 }&{\color{black} 0.82 }&{\color{black} 34.6 }\\
{\color{black}Naive multicast subgrouping }&{\color{black} 40.0 }&{\color{black} 33.8 }&{\color{black} 0.43 }&{\color{black} 33.2 }\\
{\color{black}Conventional multicast ($G = 1$)} &{\color{black} 1.0 }&{\color{black} 11.1 }&{\color{black} 0.16 }&{\color{black} 18.4 }\\
{\color{black}DBSCAN multicast subgrouping }&{\color{black} 4.3 }&{\color{black} 9.2 }&{\color{black} 0.13 }&{\color{black} 17.3 }\\
{\color{black}LSTM (simple model) }&{\color{black} 11.6 }&{\color{black} 9.5 }&{\color{black} 0.12 }&{\color{black} 18.5 }\\
{\color{black}LSTM ($K_\mathrm{max}$) }&{\color{black} 12.3 }&{\color{black} 9.8 }&{\color{black} 0.13 }&{\color{black} 19.7 }\\
{\color{black}LSTM (TL-SE) }&{\color{black} 6.7 }&{\color{black} 10.5 }&{\color{black} 0.14 }&{\color{black} 20.5 }\\
{\color{black}Genie-aided multicast subgrouping }&{\color{black} 21.0 }&{\color{black} 9.6 }&{\color{black} 0.10 }&{\color{black} 23.3 }\\
{\color{black}Upper bound multicast subgrouping }&{\color{black} 12.1 }&{\color{black} $0$ }&{\color{black} $0$ }&{\color{black} $0$ }\\
\hline
\end{tabular}
\label{table:G_analysis}
\end{table*}
}

\begin{table*}[!t]
\centering
\caption{{\color{black}Performance of multicast subgrouping methods in CF-mMIMO scenarios for different pilot lengths ($\tau_\mathrm{p}$).}}
\footnotesize
\setlength{\tabcolsep}{4pt}
%\renewcommand{\arraystretch}{1}
%\resizebox{\textwidth}{!}{
\begin{tabular}{|l|ccc|ccc|}
\hline
\multirow{2}{*}{{\color{black}\textbf{Model/Method}}} 
& \multicolumn{3}{c|}{{\color{black}\textbf{$\tau_\mathrm{p} = 10$}}} 
& \multicolumn{3}{c|}{{\color{black}\textbf{$\tau_\mathrm{p} = 20$}}} \\
\cline{2-7}

& {\color{black}\textbf{Avg. $\Psi$}} 
& {\color{black}\textbf{Avg. $G$}} 
& {\color{black}\textbf{Avg. $|G - G^\star|$}} 
& {\color{black}\textbf{Avg. $\Psi$}} 
& {\color{black}\textbf{Avg. $G$}} 
& {\color{black}\textbf{Avg. $|G - G^\star|$}} \\
\hline

{\color{black}Unicast ($G = K$)}          
& {\color{black}72.5} & {\color{black} 80.0} & {\color{black}46.0} 
& {\color{black}84.26} & {\color{black} 80.0} & {\color{black}38.2} \\

{\color{black}Naive multicast subgrouping} 
& {\color{black}71.6} & {\color{black}40.0} & {\color{black}37.5} 
& {\color{black}80.11} & {\color{black}40.0} & {\color{black}30.0} \\

{\color{black}Conventional multicast ($G = 1$)} 
& {\color{black}48.25} & {\color{black}1.0} & {\color{black}32.9} 
& {\color{black}43.87} & {\color{black}1.0} & {\color{black}46.2} \\

{\color{black}DBSCAN multicast subgrouping}       
& {\color{black}73.15} & {\color{black}25.3} & {\color{black}20.6} 
& {\color{black}77.20} & {\color{black}24.1} & {\color{black}27.7} \\

{\color{black}LSTM (simple model)}        
& {\color{black}92.6} & {\color{black}27.0} & {\color{black}10.3} 
& {\color{black}90.73} & {\color{black}24.9} & {\color{black}22.5} \\

{\color{black}LSTM ($K_\mathrm{max}$)}     
& {\color{black}93.95} & {\color{black}30.1} & {\color{black}10.6} 
& {\color{black}90.43} & {\color{black}28.2} & {\color{black}20.3} \\

{\color{black}Genie-aided multicast subgrouping} 
& {\color{black}94.71} & {\color{black}20.9} & {\color{black}13.6} 
& {\color{black}93.45} & {\color{black}21.0} & {\color{black}26.6} \\

{\color{black}Upper bound multicast subgrouping} 
& {\color{black}117.12} & {\color{black} 34.0} & {\color{black}$0$} 
& {\color{black}118.25} & {\color{black} 47.1} & {\color{black}$0$} \\

\hline
\end{tabular}
%}
\label{table:CF_results}
\end{table*}

First, we observe the results of the other benchmark methods. The sum SE using \emph{unicast} transmissions is $11.3868$ bit/s/Hz, which is the poorest result when the users' distribution implies a great variety of scenarios, including spatial clusters of users (from $1$ to $120$). 
The method that selects a \emph{random \(G\)} achieves an average sum SE of $12.0397$ bit/s/Hz, which is slightly better than using \emph{unicast} transmissions, but still far below the maximum possible sum SE.
The \emph{DBSCAN} method, which does not require an explicit estimation of \(G\), achieves a sum SE of $13.4204$ bit/s/Hz. This method outperforms both the \emph{unicast} and \emph{random \(G\)} selections, but is still far from reaching the maximum possible sum SE.

We analyze the results of the proposed models based on the neural network. The original model, that is, the \emph{LSTM (simple model)}, achieves an average sum SE of $13.7961$ bit/s/Hz, outperforming \emph{DBSCAN} results. This model also demonstrates low variance ($0.0013$) and a standard deviation of $0.0358$, indicating stable predictions over the $6000$ evaluations.
The model with replicated inputs, that is, the \emph{LSTM ($K_\mathrm{max}$)}, improves the average sum SE to $13.9747$ bit/s/Hz. This improvement suggests that input replication enhances the model's robustness when processing data with varying user counts. The model also maintains low variance and standard deviation ($0.0354$), which confirms the stability of the performance.
Finally, the model trained specifically to estimate the sum SE, that is, the \emph{LSTM (TL SE)}, demonstrates the best performance among neural networks, with an average sum SE of $14.4383$ bit/s/Hz. Although the variance ($0.0056$) and the standard deviation ($0.0749$) increase, this model provides the closest approximation to the maximum possible sum SE. This result highlights the effectiveness of training the model to optimize the sum SE directly rather than solely estimating the number of clusters.

{\color{black}
In addition to the \gls{se} results, Table~\ref{table:G_analysis} reports the average number of predicted subgroups and their deviation from the \gls{se}-optimal configuration. The standard deviation of the absolute error is also provided to assess the robustness of each method across different spatial realizations. Here, $G^\star$ denotes the number of multicast subgroups that maximizes the sum \gls{se} (upper bound multicast subgrouping).
The results show that the LSTM-based methods achieve consistently low absolute and normalized errors with respect to the optimal subgroup configuration, indicating that the learned model effectively captures the underlying spatial structure of the users. In contrast, baseline methods such as random grouping exhibit significantly larger deviations. Furthermore, the low standard deviation of the error confirms the stability of the proposed approach across different scenarios, demonstrating its robustness to variations in user distributions.
}

{\color{black}These results indicate that the predicted number of subgroups closely follows the trend of the SE-optimal configuration across different scenarios, even if exact matching is not required to achieve near-optimal performance.}

In conclusion, the LSTM-based models, particularly those incorporating data replication and direct sum SE estimation, outperform traditional methods such as \emph{unicasting}, \emph{DBSCAN}, and \emph{random \(G\)} selection. The improvement in the average sum SE indicates that these networks are better equipped to capture the clustering structure, approaching the theoretical upper bound of the sum spectral efficiency.

{\color{black}
\subsection{ROBUSTNESS UNDER PERTURBED COVARIANCE INFORMATION}

In practical deployments, the spatial covariance matrices $\boldsymbol{R}_k$ are not perfectly known, since they must be estimated from received signals over finite observation intervals. As a result, the covariance information available at the base station may be affected by estimation errors, temporal mismatch, and changes in the propagation environment. Since the proposed subgrouping framework relies on the spatial covariance matrices as input information, it is important to evaluate its robustness under imperfect covariance knowledge.

To this end, we introduce controlled perturbations in the covariance matrices used as input to the subgrouping algorithm. The perturbed covariance matrix associated with user $k$ is modeled as
\begin{equation}
\tilde{\boldsymbol{R}}_k =
(1-\alpha)\boldsymbol{R}_k + \alpha \boldsymbol{E}_k,
\end{equation}
where $\boldsymbol{R}_k$ denotes the true covariance matrix and $\boldsymbol{E}_k$ is a random positive semi-definite matrix. The perturbation matrix is generated as
\begin{equation}
\boldsymbol{E}_k =
\frac{1}{M}\boldsymbol{Z}\boldsymbol{Z}^{\herm},
\end{equation}
where $\boldsymbol{Z}$ has independent and identically distributed complex Gaussian entries and $M$ denotes the number of antennas. After generation, $\boldsymbol{E}_k$ is normalized to have the same trace as $\boldsymbol{R}_k$, i.e.,
\begin{equation}
\boldsymbol{E}_k \leftarrow
\frac{\mathrm{tr}(\boldsymbol{R}_k)}
{\mathrm{tr}(\boldsymbol{E}_k)}
\boldsymbol{E}_k.
\end{equation}
The parameter $\alpha \in [0,1]$ controls the level of mismatch between the true covariance matrix and the covariance matrix used by the learning-based subgrouping method. Therefore, $\alpha=0$ corresponds to the nominal case with perfect covariance information, whereas larger values of $\alpha$ represent increasing levels of covariance uncertainty.

This perturbation model is specifically designed to preserve the fundamental structure of covariance matrices. Since both $\boldsymbol{R}_k$ and $\boldsymbol{E}_k$ are positive semi-definite, their convex combination $\tilde{\boldsymbol{R}}_k$ is also positive semi-definite. Hence, the perturbed matrices remain valid covariance matrices. This is in contrast to a direct additive Gaussian perturbation, which could destroy the positive semi-definite structure and produce physically inconsistent matrices. The proposed model can therefore be interpreted as a controlled mismatch between the true spatial covariance matrix and its estimate, while preserving the mathematical properties required for covariance-based signal processing.

In the robustness evaluation, the subgrouping decision is obtained from the perturbed covariance matrices $\tilde{\boldsymbol{R}}_k$, whereas the resulting sum spectral efficiency is evaluated using the true covariance matrices $\boldsymbol{R}_k$. This procedure allows us to isolate the impact of imperfect covariance information on the subgrouping decision itself.

Table~\ref{tab:robustness_covariance_errors} reports the average sum spectral efficiency obtained by the three proposed LSTM-based configurations for different values of $\alpha$. The table also includes the percentage with respect to the upper bound reported in Table~\ref{table:sum_SE}. The case $\alpha=0$ corresponds to the nominal performance without covariance perturbations.

\begin{table*}[t!]
\centering
\caption{Robustness of the LSTM-based models under perturbed covariance information.}
\label{tab:robustness_covariance_errors}
\resizebox{\textwidth}{!}{%
\begin{tabular}{|c|ccc|ccc|}
\hline
\multirow{2}{*}{$\boldsymbol{\alpha}$}
&
\multicolumn{3}{c|}{\textbf{Avg. $\boldsymbol{\Psi}$ [bit/s/Hz]}}
&
\multicolumn{3}{c|}{\textbf{\% of Upper bound}}
\\
\cline{2-7}
&
\textbf{LSTM simple model}
&
\textbf{LSTM ($K_{\mathrm{max}}$)}
&
\textbf{LSTM (TL SE)}
&
\textbf{LSTM simple model}
&
\textbf{LSTM ($K_{\mathrm{max}}$)}
&
\textbf{LSTM (TL SE)}
\\
\hline
0   & 13.7961 & 13.9747 & 14.4383 & 81.4 & 82.5 & 85.2 \\
0.1 & 13.4179 & 13.6384 & 14.0781 & 79.2 & 80.5 & 83.1 \\
0.2 & 13.1042 & 13.3616 & 14.0389 & 77.4 & 78.9 & 82.9 \\
0.3 & 12.9593 & 13.2524 & 13.4522 & 76.5 & 78.2 & 79.4 \\
0.4 & 12.8155 & 13.0246 & 13.7209 & 75.7 & 76.9 & 81.0 \\
0.5 & 12.8177 & 13.0047 & 13.5727 & 75.7 & 76.8 & 80.1 \\
\hline
\end{tabular}
}
\end{table*}

The results show that the three LSTM-based configurations remain reasonably stable as the covariance mismatch increases. As expected, the use of perturbed covariance matrices reduces the achievable sum spectral efficiency with respect to the nominal case $\alpha=0$, since the subgrouping decision is progressively based on spatial statistics that deviate from the true covariance information. However, the degradation is gradual and moderate for all the evaluated models.

For the simple LSTM model, the average sum spectral efficiency decreases from $13.7961$ bit/s/Hz in the nominal case to $12.8177$ bit/s/Hz for $\alpha=0.5$, which corresponds to a relative degradation of approximately $7.09\%$. For the fixed-size LSTM model with $K_{\mathrm{max}}$, the performance decreases from $13.9747$ bit/s/Hz to $13.0047$ bit/s/Hz, corresponding to a relative degradation of approximately $6.94\%$. Finally, the transfer-learning model decreases from $14.4383$ bit/s/Hz to $13.5727$ bit/s/Hz, which represents a relative degradation of approximately $5.99\%$. Therefore, the LSTM model adapted through transfer learning exhibits the smallest relative degradation under the strongest perturbation level considered.

The fixed-size LSTM model with $K_{\mathrm{max}}$ consistently outperforms the simple LSTM model for all values of $\alpha$. This indicates that the fixed-size input representation provides a more robust structure for the recurrent encoder, even though it requires replication when the number of users is lower than $K_{\mathrm{max}}$. The replicated representation appears to help the LSTM extract more stable structural information from the covariance data, which results in improved robustness under imperfect covariance knowledge.

The transfer-learning model achieves the best performance across all evaluated perturbation levels. This behavior is consistent with its training strategy, since the dense output stage is adapted to select the subgrouping configuration according to the resulting sum spectral efficiency, rather than only according to the estimated number of clusters. As a consequence, the final decision is more directly aligned with the communication performance metric. Although the performance is not strictly monotonic with respect to $\alpha$, which can occur due to the random nature of the perturbations and the discrete nature of the subgrouping decision, the transfer-learning model maintains a clear advantage over the other LSTM-based configurations throughout the evaluated range.

It is also relevant to compare these perturbed results with the baseline methods reported in Table~\ref{table:sum_SE}. The simple LSTM model remains above the unicast configuration, the naive multicast subgrouping strategy, and the conventional multicast scheme for all the considered perturbation levels. However, for large covariance mismatch values, its performance becomes slightly lower than the DBSCAN-based multicast subgrouping baseline. In contrast, the fixed-size LSTM model remains close to the DBSCAN baseline even under strong perturbations, while clearly outperforming the unicast, naive multicast, and conventional multicast schemes.

The transfer-learning model provides the strongest robustness margin. Even for the largest perturbation level, $\alpha=0.5$, it achieves an average sum spectral efficiency of $13.5727$ bit/s/Hz, which remains above the DBSCAN baseline value of $13.4204$ bit/s/Hz. Moreover, at $\alpha=0.5$, the transfer-learning model still preserves approximately $80.1\%$ of the upper-bound performance, which is slightly higher than the $79.2\%$ achieved by DBSCAN in the nominal baseline evaluation. This shows that the proposed transfer-learning strategy remains competitive with the strongest non-learning baseline even when the covariance matrices used for subgrouping are significantly perturbed.

Overall, the robustness analysis confirms that the proposed LSTM-based subgrouping framework does not rely on perfectly estimated covariance matrices. Although imperfect covariance information naturally reduces the achievable sum spectral efficiency, the degradation is progressive and the proposed models remain competitive under substantial covariance mismatch. In particular, the fixed-size LSTM representation improves robustness with respect to the simple LSTM model, while the transfer-learning adaptation provides the best overall trade-off between nominal performance and resilience to covariance estimation errors.
}

{\color{black}
\subsection {GENERALIZATION TO CF-mMIMO SCENARIOS}
To further assess the generalization capability of the proposed learning-based multicast subgrouping approach, we extend the evaluation to a \gls{cf-mmimo} setting. The objective of this analysis is not to redesign or optimize the system under this architecture, but rather to evaluate whether the subgroup structure learned in the single-cell scenario can be effectively transferred to a distributed deployment.

In this case, we consider a CF-mMIMO network composed of $L=100$ distributed access points (APs), each equipped with $N=4$ antennas, serving a varying number of single-antenna users with $K=\{40, \ldots, 120\}$. APs and users are deployed over a square area of side $1000$~m with wrap-around to avoid boundary effects. 

Two pilot lengths are considered, namely $\tau_\mathrm{p} = 10$ and $\tau_\mathrm{p} = 20$, in order to evaluate different levels of pilot contamination, while maintaining consistency with the spatial user distributions used in the single-cell scenarios.

Although the CF-mMIMO channel model does not exhibit the same structured spatial correlation as in the single-cell case, the spatial distribution of users across the network still induces implicit large-scale statistical patterns that can be exploited for subgroup inference from second-order statistics.

The LSTM-based models, trained exclusively on single-cell massive MIMO data, are directly applied to these CF-mMIMO scenarios without retraining. Specifically, we evaluate the LSTM (simple model) and the LSTM ($K_\mathrm{max}$), and compare them against conventional baseline strategies, including unicast transmission, conventional multicast, DBSCAN-based subgrouping, and naive multicast subgrouping with random selection of $G$. The transfer learning model aimed at spectral efficiency estimation is intentionally excluded, since the mapping between subgroup configuration and performance depends on system-specific aspects such as interference management, precoding design, and power allocation.

Table~\ref{table:CF_results} summarizes the obtained results in terms of average sum spectral efficiency, the average number of predicted subgroups, and their deviation from the SE-optimal configuration, defined by the subgroup number $G^\star$ that maximizes the sum spectral efficiency.

The results show that the proposed LSTM-based models maintain a small deviation from the optimal subgroup configuration across all considered CF-mMIMO scenarios and pilot lengths, even without retraining. This confirms that the learned representation effectively captures the underlying spatial structure of the users, which is preserved across different network architectures. As a result, the inferred subgroup configuration remains close to the optimal one, leading to competitive spectral efficiency performance.

In contrast, baseline methods exhibit larger deviations from the optimal subgroup configuration, resulting in a less efficient exploitation of the spatial structure and lower spectral efficiency. Overall, these results support the claim that the proposed learning-based subgrouping strategy generalizes beyond the single-cell regime at the level of structural inference, while maintaining robustness under different system configurations.
}

\subsection{COMPUTATIONAL COMPLEXITY ANALYSIS}

The proposed LSTM-assisted multicast subgrouping framework involves three main computational blocks: PCA-based preprocessing, neural-network inference, and clustering-based subgroup assignment. Below, we provide a concise complexity analysis for each component.
\subsubsection{PCA PREPROCESSING}
For each snapshot, PCA is applied to the $K$ spatial covariance matrices $\{R_k\}$. 
Computing the principal components requires performing an eigenvalue decomposition of an $M \times M$ matrix, leading to a per-snapshot complexity of
\begin{equation}
    O(K M^2).
\end{equation}

\subsubsection{LSTM INFERENCE}
The LSTM layer processes $K$ feature vectors of length $K_{\max}$. With $H = K_{\max}$ hidden units, the forward-pass complexity scales as
\begin{equation}
    O(K K_{\max}^2),
\end{equation}
which is lightweight compared to the PCA step.
\subsubsection{DENSE NEURAL NETWORK INFERENCE}
The dense layers contain at most $O(K_{\max}^2)$ parameters. Their computational cost is negligible relative to the PCA and LSTM steps.
\subsubsection{SUBGROUPING REFINEMENT VIA K-MEANS}
The K-means refinement used to assign users to the predicted number of multicast subgroups incurs a complexity of
\begin{equation}
    O(K G T M^2),
\end{equation}
where $G$ is the number of subgroups and $T$ is the number of clustering iterations (typically $T < 10$).
\subsubsection{DBSCAN BASELINE}
The density-based DBSCAN clustering used as a benchmark operates on the pairwise similarity structure derived from the metric in (\ref{eq:metric}). This results in a worst-case complexity of
\begin{equation}
    O(K^2 M^2).
\end{equation}

Overall, the proposed method scales polynomially in both $K$ and $M$, and remains significantly less complex than optimization-based multicast beamforming approaches (e.g., SDR- or SCA-based designs). 
Low-complexity multicast beamforming techniques proposed in  \cite{2023Zhang,2026Zaher} significantly reduce the computational burden of classical SDR/SCA solvers. However, all these methods share a fundamental prerequisite: full per-user instantaneous CSI must be available at the BS at each coherence block and typically incur cubic or higher-order computational cost in $K$ and $M$. This requirement implies one mutually orthogonal UL pilot per user, resulting in UL training overheads that scale linearly with $K$, which is incompatible with the scalable CSI-acquisition model assumed in this work. The proposed subgrouping framework is therefore well suited for large-scale \gls{mamimo} systems operating under practical \gls{csi} acquisition constraints.

For clarity, we summarize the key differences:
\begin{itemize}
    \item \textbf{Proposed framework:} requires one pilot per subgroup, operates with large-scale spatial covariances only, and scales polynomially with $K$ and $M$; thus suitable for massive-antenna multicasting with short coherence intervals.
    \item \textbf{Low-complexity multicast beamformers \cite{2023Zhang,2026Zaher}:} 
    require instantaneous CSI per user, incur frequent re-optimization when channels vary, and cannot operate under one-pilot-per-subgroup constraints.
\end{itemize}

Therefore, although these multicast-optimized precoders constitute valuable baselines in instantaneous-CSI settings, they are not directly applicable to the scalable-CSI regime addressed in this paper. Our results should thus be interpreted within the multicasting scenario where instantaneous CSI is fundamentally unavailable, making subgroup selection the primary performance-limiting factor.

\section{CONCLUSION AND FUTURE PERSPECTIVES}
\label{sec:conclusion}

This work has presented a novel LSTM-assisted multicast subgrouping strategy for massive MIMO systems, leveraging snapshot-specific PCA and deep learning to estimate the optimal number of subgroups that maximize the overall sum spectral efficiency. The proposed mechanism preserves multicast semantics, one content, one service, while improving physical-layer efficiency through multiple precoders adapted to users' channel-statistics similarity. To the best of our knowledge, this is the first framework that jointly combines \emph{(i)} snapshot-specific dimensionality reduction, \emph{(ii)} sequential modeling of users' covariance structure, and \emph{(iii)} transfer learning for SE prediction in multicast subgrouping, thereby eliminating exhaustive search at inference and enabling strong generalization.

The presented framework is designed for scenarios where full per-user \gls{csi} and large-scale optimization-based multicast precoding are not implementable and, consequently, the results should be interpreted within the pilot-limited \gls{mamimo} regime.

The developed architecture effectively maps spatial covariance information into a fixed-dimensional latent representation, enabling robust estimation of the multicast structure even under variable user distributions and cluster formations. In particular, the proposed approach correctly identifies when subgrouping is beneficial (clustered scenarios) and when it is not (sparse scenarios), adapting the number of subgroups to the spatial structure of the environment. The base LSTM classifier and its \gls{tl} extension avoid exhaustive search over all subgroup configurations, substantially reducing computational complexity. Moreover, the \gls{tl} approach significantly lowers the amount of labeled training data required to estimate spectral efficiency by reusing a pre-trained encoder and fine-tuning only a lightweight output head.

Extensive simulations across a wide range of user distributions demonstrate the effectiveness of the proposed approach. The base LSTM classifier, trained solely on synthetic cluster labels, already surpasses conventional multicasting, unicast transmission, random subgrouping, and density-based clustering (DBSCAN), achieving over 81\% of the maximum achievable sum spectral efficiency. A replication-based enhancement further increases robustness to variable input lengths, raising the performance to 82.5\%. Finally, the \gls{tl} extension, which directly optimizes the SE prediction, reaches up to 85.2\% of the theoretical optimum without requiring exhaustive subgroup search during inference.

A key outcome of this work is that the proposed model learns the underlying spatial structure governing multicast subgrouping. This is further supported by the additional validation in CF-mMIMO scenarios, where the models—trained exclusively on single-cell data—maintain a small deviation from the SE-optimal subgroup configuration. This result confirms that subgroup-number inference generalizes across system architectures, while the mapping from subgroup configuration to spectral efficiency remains system-dependent.

From a system perspective, the proposed method provides a robust and adaptive solution across heterogeneous deployment scenarios. While unicast transmission performs well in sparse user distributions and DBSCAN clustering excels in highly clustered scenarios, both degrade significantly outside their optimal conditions. In contrast, the LSTM-based subgrouping strategy consistently delivers high spectral efficiency across diverse spatial configurations, making it particularly suitable for realistic deployments with dynamic and unpredictable user topology.

\textcolor{black}{Future research directions include extending the proposed learning-based multicast subgrouping framework to multi-cell and \gls{cf-mmimo} deployments, where additional challenges arise, such as inter-cell interference, distributed \gls{csi} acquisition, and decentralized access point coordination. While the present work assumes accurate long-term covariance information, the framework remains compatible with practical systems in which such statistics can be estimated over extended time scales. Incorporating realistic mobility models and explicitly evaluating robustness under time-varying or partially outdated covariance information would further enhance the practical applicability of the approach. Finally, exploring the joint optimization of multicast subgrouping, pilot assignment, and precoding strategies represents a promising direction to unlock additional system-level gains in highly dynamic wireless environments.}

\textcolor{black}{The evaluation presented in this work is based on synthetic channel covariance matrices generated according to widely used local-scattering models, which enable controlled and reproducible analysis. While this allows isolating the impact of spatial channel statistics on the proposed learning-based framework, we do not claim validation on real massive MIMO measurement campaigns. Extending the proposed approach to experimental datasets that capture additional propagation effects, such as Rician fading, non-stationary mobility, or heterogeneous deployments, constitutes an important and relevant direction for future work.}

\appendices

\section{Proof of Lemma 1}
The \gls{mmse} estimate of the subgroup channel $\boldsymbol{h}^g$ can be obtained as
\begin{align}
           \hat{\boldsymbol{h}}^g &= \mathbb{E}\left\{\boldsymbol{h}^g \big | \boldsymbol{y}^g\right\} \\
                                      &=\mathbb{E}\left\{\boldsymbol{h}^g(\boldsymbol{y}^g)\herm\right\}\Big(\mathbb{E}\left\{\boldsymbol{y}^g(\boldsymbol{y}^g)\herm\right\}\Big)^{-1} \boldsymbol{y}^g\nonumber,
    \label{eq:comp_chann_est}
\end{align}
where
\begin{equation}
   \mathbb{E}\left\{\boldsymbol{h}^g (\boldsymbol{y}^g)\herm\right\}=K_g \mathbb{E}\left\{\boldsymbol{h}_{k}(\boldsymbol{h}_{k})\herm\right\} = K_g \boldsymbol{R}^g,
\end{equation}
and
\begin{equation}
\begin{split}
   &\mathbb{E}\left\{\boldsymbol{y}^g (\boldsymbol{y}^g)\herm\right\} \\
   &\qquad=\tau_{\mathrm{p}} \sum_{k\in\mathcal{K}_g} P_{k} \mathbb{E}\left\{\boldsymbol{h}_{k}\boldsymbol{h}_{k}\herm\right\} + \mathbb{E}\left\{\boldsymbol{n}_{g}\boldsymbol{n}_{g}\herm\right\} \\
   &\qquad=\tau_{\mathrm{p}} \! \sum_{k\in\mathcal{K}_g} P_{k} \boldsymbol{R}_{k} + \sigma_{\mathrm{u}}^2\boldsymbol{I}_{M}\triangleq\boldsymbol{\Gamma}_{g}.
\end{split}
\end{equation}

The \gls{mmse} estimate is a zero-mean complex Gaussian vector whose correlation matrix can be straightforwardly obtained as
\begin{equation}
    \mathbb{E}\left\{\hat{\boldsymbol{h}}^g\left(\hat{\boldsymbol{h}}^g\right)\herm\right\}=K_g^2 \boldsymbol{R}^g \boldsymbol{\Gamma}_{g}^{-1} \boldsymbol{R}^g.
\end{equation}
The subgroup channel estimation error $\tilde{\boldsymbol{h}}^g = \boldsymbol{h}^g - \hat{\boldsymbol{h}}^g$ is also a zero-mean complex Gaussian vector whose correlation matrix is given by
\begin{equation}
\begin{split}
    \mathbb{E}\left\{\tilde{\boldsymbol{h}}^g\left(\tilde{\boldsymbol{h}}^g\right)\herm\right\}&=\mathbb{E}\left\{\boldsymbol{h}^g\left(\boldsymbol{h}^g\right)\herm\right\} - \mathbb{E}\left\{\hat{\boldsymbol{h}}^g\left(\hat{\boldsymbol{h}}^g\right)\herm\right\} \\
    &= \boldsymbol{R}^g -  K_g^2\boldsymbol{R}^g \ \boldsymbol{\Gamma}_{g}^{-1} \boldsymbol{R}^g.
    \end{split}
\end{equation}

\bibliographystyle{IEEEtran}
\bibliography{main}
\end{document}